\newif\ifabstract
\abstracttrue
 \abstractfalse 
\newif\iffull
\ifabstract \fullfalse \else \fulltrue \fi

\documentclass[11pt,onecolumn]{article}
\usepackage{amsfonts}
\usepackage{amssymb}
\usepackage{amstext}
\usepackage{amsmath}
\usepackage{xspace}
\usepackage{theorem}
\usepackage{graphicx}
\usepackage{url}
\usepackage{graphics}
\usepackage{colordvi}
\usepackage{colordvi}
\usepackage{subfigure}

\usepackage{setspace}
\usepackage{color}
\usepackage{cite}
\usepackage[disable]{todonotes}
\usepackage{hhline}
\usepackage{array}
\usepackage{pifont}
\usepackage{enumerate}
\usepackage{bm}

\usepackage{graphicx,epsfig,amsfonts,bbm,epstopdf,tabularx}

\usepackage{url}

\usepackage{xcolor}
\hyphenation{op-tical net-works semi-conduc-tor}
\usepackage[comma,numbers,square,sort&compress]{natbib}
\usepackage{algorithm,algpseudocode}
\usepackage{algorithmicx}
\usepackage{stmaryrd}
\usepackage{multirow}
\usepackage{graphicx}
\usepackage{arydshln}
\usepackage{mathtools}
\usepackage{tabulary}
\usepackage{booktabs}
\usepackage{subfigure}
\usepackage{xspace}

\textheight 9in \advance \topmargin by -1.0in \textwidth 6.5in
\advance \oddsidemargin by -0.8in
\newcommand{\myparskip}{3pt}
\parskip \myparskip

\algnewcommand{\LeftComment}[1]{\Statex \(\triangleright\) #1}

\newcolumntype{L}[1]{>{\raggedright\let\newline\\\arraybackslash\hspace{0pt}}m{#1}}
\newcolumntype{C}[1]{>{\centering\let\newline\\\arraybackslash\hspace{0pt}}m{#1}}
\newcolumntype{R}[1]{>{\raggedleft\let\newline\\\arraybackslash\hspace{0pt}}m{#1}}

        {\hspace*{\fill}$\Box$\par\vspace{4mm}}

\ifodd 0
\newcommand{\com}[1]{\textbf{\color{blue} (COMMENT: #1)}}
\else

\newcommand{\com}[1]{}
\fi

\newcommand{\rsgenco}{\textsc{srGENCO}\xspace}
\newcommand{\scsp}{\textsf{sOSP}\xspace}
\newcommand{\socs}{\textsf{sOffer}\xspace}
\newcommand{\cratio}{\textsc{cr}\xspace}
\newcommand{\ofa}{\textsf{OFA}\xspace}
\newcommand{\ocsmb}{\textsf{mOffer}\xspace}
\newcommand{\mocsmb}{\textsf{gOffer}\xspace}
\newcommand{\nostorage}{\textsf{NoStorage}\xspace}
\newcommand{\fonline}{\textsf{FixedOnline}\xspace}

\newtheorem{myDef}{Definition}




\newcommand{\csp}{{\sf HA-CSP}\xspace}










\newcommand{\be}{\begin{enumerate}}
\newcommand{\ee}{\end{enumerate}}
\newcommand{\bd}{\begin{description}}
\newcommand{\ed}{\end{description}}
\newcommand{\bi}{\begin{itemize}}
\newcommand{\ei}{\end{itemize}}

\newtheorem{theorem}{Theorem}[section]
\newtheorem{lemma}[theorem]{Lemma}

\newenvironment{proof}{\par \smallskip{\bf Proof:}}{\hfill\stopproof}
\def\stopproof{\square}
\def\square{\vbox{\hrule height.2pt\hbox{\vrule width.2pt height5pt \kern5pt
\vrule width.2pt} \hrule height.2pt}}




\renewcommand{\phi}{\varphi}



\setlength{\parskip}{2mm} \setlength{\parindent}{0mm}

\mathchardef\hyphen="2D






\begin{document}

\title{Online Offering Strategies for Storage-Assisted Renewable Power Producer in Hour-Ahead Market}
\author{Lin Yang\thanks{Information Engineering Department, The Chinese University of Hong Kong. Email: {\tt yl015@ie.cuhk.edu.hk}.} \and Mohammad Hassan Hajiesmaili \thanks{Institute of Network Coding, The Chinese University of Hong Kong. Email: {\tt hajiesmaili@gmail.com}.} \and Hanling Yi \thanks{Information Engineering Department, The Chinese University of Hong Kong. Email: {\tt yh014@ie.cuhk.edu.hk}.} \and Minghua Chen \thanks{Information Engineering Department, The Chinese University of Hong Kong. Email: {\tt minghua@ie.cuhk.edu.hk}.}}

\maketitle

\begin{abstract}
A promising approach to hedge against the inherent uncertainty of renewable generation is to equip the renewable plants with energy storage systems.
This paper focuses on designing profit maximization offering strategies, \textit{i.e.}, the strategies that determine the offering price and volume, for a storage-assisted renewable power producer that participates in hour-ahead electricity market. Designing the strategies is challenging since (i) the underlying problem is coupled across time due to the evolution of the storage level, and (ii) inputs to the problem including the renewable output and market clearing price are unknown when submitting offers. Following the competitive online algorithm design approach, we first study a basic setting where the renewable output and the clearing price are known for the next hour. We propose \socs, a simple online offering strategy that achieves the best possible competitive ratio of $O(\log \theta)$, where $\theta$ is the ratio between the maximum and the minimum clearing prices.
Then, we consider the case where the clearing price is unknown. By exploiting the idea of submitting multiple offers to combat price uncertainty, we propose \ocsmb, and demonstrate that the competitive ratio of \ocsmb converges to that of \socs as the number of offers grows.
Finally, we extend our approach to the scenario where the renewable output has forecasting error. We propose \mocsmb as the generalized offering strategy and characterize its competitive ratio as a function of the forecasting error. Our trace-driven experiments demonstrate that our algorithms achieve performance close to the offline optimal and outperform a baseline alternative significantly.
\end{abstract}

\section{Introduction}
Nowadays, renewable power producers, such as wind farms and solar plants, are being rapidly integrated to the power system. In $2015$, investment on renewables sets a record of $296$ billion dollars, more than double the amount for fossil fuels~\cite{ren21}.
Renewables are attractive in that they are clean, free (except capital and maintenance cost), and inexhaustible. Integration of renewables into the power system and particularly in the electricity market, however, is challenging since their generation is uncontrollable, intermittent, and unpredictable.
A promising approach to facilitate the renewable integration and hedge against the uncertainty, is to equip the renewable plants with the giant energy storage systems~\cite{dunn2011electrical,Chowdhury2016Benefits}. Some examples are the storage stations at Southern California (with capacity of $40$MWh), South Korea ($16$MWh), and Germany ($15$MWh)~\cite{ex_battery}.


In several existing electricity markets, renewable power producers receive guaranteed feed-in tariffs, \textit{e.g.}, ``take-all-wind'' policy in California's electricity market~\cite{caiso}. ``take-all-wind'' policy guarantees that the market absorbs the entire renewable supply at favorable fixed prices. This policy is feasible since the current market share of the renewables is not significant, at most $19.2\%$, on average around the world, in $2015$~\cite{ren21}.

The extra-market treatment policy, however, is not viable in future. Recently, the deployment cost of the renewables is dropping rapidly, \textit{e.g.}, the price of solar panel has fallen by $80\%$ from $2008$ to $2016$~\cite{bloomberg}. This reduction in price along with the environmental concerns
pushes rapid penetration of the renewables, \textit{e.g.}, in Denmark, the plan is to achieve $50\%$ and $100\%$ renewable generation in $2020$ and $2050$, respectively \cite{Denmarkwind2}. Hence, in eventual market with the considerable renewables' market share, it is inevitable to treat renewable producers the same as other traditional generation companies~\cite{kim2011optimal}.

Several electricity markets such as  NYISO~\cite{NYISO}, CAISO~\cite{caiso}, and Nord Pool~\cite{nord} operate in a multi-settlement manner and settle transactions at multiple timescales, \textit{i.e.}, day-ahead, hour-ahead~\cite{Harvard2011Primer,jiang2015optimal}, and real-time.
Considering the uncertainty of renewable output, \rsgenco tends to participate in short-term market, specifically hour-ahead market, without suffering profit reduction caused by long-term forecasting errors \cite{kim2011optimal}. In reality, CAISO's Participating Intermittent Resource Program (PIRP \cite{caiso}) is already requiring wind power plants to bid into their hour-ahead market. In hour-ahead market operation, the generation companies including renewable producers submit their \textit{offers} (including offering price and offering volume) for selling the electricity in the next hour (see Sec.~\ref{sec:hour} for more details on the hour-ahead market operation). This work focuses on designing offering strategies, \textit{i.e.}, the strategies that determine the offering price and volume, for Storage-assisted Renewable GENeration COmpany (\rsgenco) that participates in hour-ahead market.

As depicted in Fig.~\ref{fig:scenario}, we consider a scenario in which an \rsgenco, like other traditional generation companies, participates in hour-ahead market by (1) submitting the offer. After receiving the offers, (2) the market operator matches the offers with the bids from the demand-side and announces a market clearing price. If the offering price of \rsgenco is less than the clearing price, (3) its offering volume is considered as the \textit{commitment} to the market for the next hour. In turn, \rsgenco is paid according to the clearing price.
%

\begin{figure}[h]
	\begin{center}
		\includegraphics[width=0.8\textwidth]{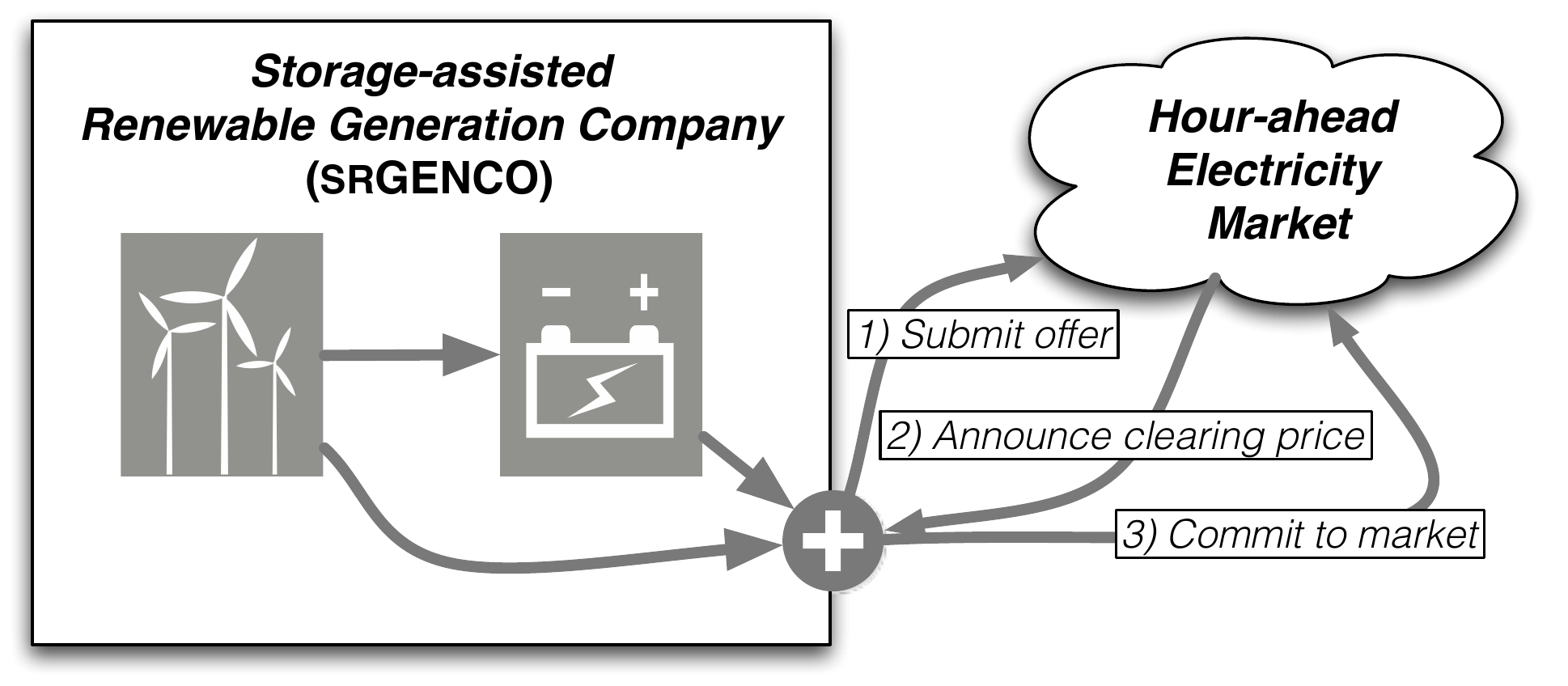}
	\end{center}
	\vspace{-5mm}
		\caption{The scenario}
		\label{fig:scenario}
\end{figure}



Finding profit maximization offering strategy for a renewable producer without storage, is nontrivial due to the inherent uncertainty of the renewables and dynamics in the market clearing price. In the presence of storage, the offering strategy is even more challenging because of the additional design space enabled by the storage. More specifically, at first, \rsgenco can use the storage to compensate for the slots that the renewable output cannot fulfill the commitment. In addition, the storage can potentially offer another economic advantage. That is, it can shift the energy through absorbing the renewable output during low price periods, and then discharge it during the high price periods. In this way, designing profit maximization offering strategy in the presence of storage comes with wider design space than those without the storage and potentially can bring more profit for \rsgenco.


To design the offering strategy, this paper casts an optimization problem with the objective of maximizing the long-term profit of \rsgenco subject to the storage capacity constraints. The inputs to the problem, \textit{i.e.}, the renewable output and the clearing price, however, are unknown for \rsgenco when submitting offer. This emphasizes the need for online solution design which is challenging, since the problem is coupled across time due to the evolution of the storage. We note that some similar problems have been studied in literature using offline ~\cite{castronuovo2004optimization,garcia2008stochastic} and stochastic optimization approaches~\cite{kim2011optimal,jiang2015optimal}. We refer to Sec.~\ref{sec:prob} for details.

 By introducing the system model and problem formulation in Sec.~\ref{sec:sys}, this work tackles the profit maximization offering strategy problem following online competitive algorithm design~\cite{Borodin98} and makes the following contributions:

\begin{enumerate}
	\item In Sec.~\ref{sec:sol}, we study a basic setting where the exact values of renewable output and the clearing price for the next slot are known to \rsgenco before submitting the offer. Even under this idealized setting, solving the problem is still challenging since the input to problem is still unknown for future slots beyond the next one. We propose \socs, a simple online offering strategy, in which the offering volume is calculated through a piecewise exponential/constant function of the renewable output and the current storage level, as well as the clearing price of the next slot.
	Our analysis demonstrates that \socs achieves the best possible competitive ratio of $O( \log \theta)$, where $\theta$ is the ratio between the maximum and minimum clearing prices.
	\item In Sec.~\ref{sec:online_nostep1}, we study the case where the clearing price is unknown and propose \ocsmb. In \ocsmb, \rsgenco submits multiple offers, each of which conveys a portion of the total offering volume, at different offering prices. Our analysis shows that the competitive ratio of \ocsmb converges to the ratio of \socs as the number of offers grows.
    Moreover, in Sec.~\ref{sec:online_nostep2}, our approach is extended to the case where \rsgenco knows renewable output with forecasting error. We propose \mocsmb as the generalized offering strategy and characterize the competitive ratio as a function of the maximum forecasting error.
	\item In Sec.~\ref{sec:exp}, by extensive numerical experiments based on real-world traces, we show that our online offering strategies can achieve satisfactory performance as compared to the offline optimum. In addition, \mocsmb with $10\%$ forecasting error improves the profit of \rsgenco by $15\%$ as compared to the baseline scenario without storage. As notable observations, our experiments demonstrate that when the market clearing price is unknown, submitting $3$ offers is sufficient to achieve almost the same performance as the case that the market price is known. Moreover, forecasting error of more than $20\%$ significantly degrades the performance results. In summary, our observations demonstrate that, while the uncertainty in market price can be effectively handled by multiple offer submissions, accurate short-term renewable forecasting is vital for \mbox{\rsgenco} to obtain a desired profit.
\end{enumerate}
%
%

We conclude the paper in Sec.~\ref{sec:con}. Due to space limitation, all proofs are included in our technical report~\cite{tech_report}.

\section{Model and Problem Formulation}
\label{sec:sys}
\begin{table}
	\caption{Summary of key notations}
	\label{tbl:not}
	\vspace{-3mm}
	\begin{center}
		\begin{tabular}{|c L{6.2cm}|}
			\hline
			\textbf{Notation} & \textbf{Description} \\
			\hline \hline
			$t$ & Index of one-hour time slot \\
			
			$T$ & The number of time slots, $T\geq0$\\
			
			$\mathcal{T}$ & Set $\mathcal{T} = \{1, 2, \dots, T\}$\\
			\hline
			\hline
			$p(t)$ & Market clearing price at  $t$, $p_{\min} \leq p(t) \leq p_{\max}$ \\
			
			$\theta$ & The ratio between the maximum and the minimum clearing prices, $\theta = p_{\max} / p_{\min}$\\
			\hline
			\hline
			$u(t)$ & The renewable output of \rsgenco at $t$\\
			
			$C$ & The capacity of storage system\\
			$\rho_c$ &  The maximum charging rate of storage system\\
			$\rho_d$ &  The maximum discharging rate of storage system\\
			
			$z(t)$  & The storage level at the beginning of $t$, see Eq.~\eqref{eq:et}\\
%
%
			\hline
			\hline
			$\hat{p}(t)$ & \textbf{optimization variable}, offering price of \rsgenco at $t$\\
			
			$\hat{x}(t)$ & \textbf{optimization variable}, offering volume of \rsgenco at $t$ \\
			\hline
			\hline
			$x(t)$ & Commitment volume of \rsgenco at $t$, see Eq.~\eqref{eq:xt}\\
			
			$y(t)$ & Over-commitment volume of \rsgenco at $t$, see Eq.~\eqref{eq:yt}\\
			
			$R(t)$ & The net profit of \rsgenco at $t$, see Eq.~\eqref{eq:rt}\\
			\hline
		\end{tabular}
	\end{center}
\end{table}

\subsection{Hour-Ahead Electricity Market}
\label{sec:hour}
The hour-ahead market operates in hourly basis\footnote{We emphasize that our model works in short-term market in which the offers are submitted \textit{before} actual market operation in hourly or even shorter scales, \textit{e.g.}, $15$ or $5$ minutes ahead. For real examples we refer to California ISO~\cite{caiso} and Nord Pool Markets~\cite{nord}. For a survey of electricity markets, we refer to~\cite{Harvard2011Primer}.} and \mbox{\rsgenco} along with other generation companies submits its offer, including the offering price and the offering volume (see Sec.~\ref{sec:offer2} for details), for the forthcoming hour shortly before the operation time. The market operator (usually known as independent system operator, ISO) matches the offers collected from the generation companies with the received bids from the demand-side, \textit{e.g.}, utility companies. Then, using a well-established double auction mechanism~\cite{liu2007risk} it determines the market clearing price for the next hour. The generation companies with the offering prices less than the clearing price are successful and the offering volume of electricity is considered as their \textit{commitment} to be sold on the market at the clearing price. Thus, successful offers sell at price at least as high as what they offered. All the remaining offers fail since their offering prices are greater than the clearing price.


Formally, we consider a time-slotted model, such that the time horizon $T$ is chopped into multiple slots with equal length, \textit{e.g.}, $1$ hour,  each of which is indexed by $t$. Shortly before slot $t$, \rsgenco along with other participants submits its offer, for the next slot. The ISO determines the clearing price $p(t)$ shortly after the participants submit their offers and bids. We assume $p_{\min} \leq p(t) \leq p_{\max}$, and parameter $\theta$ is defined as the ratio between the maximum and the minimum clearing prices, \textit{i.e.}, ${\theta = p_{\max} / p_{\min}}$. We later use $\theta$ to analyze our algorithms.

\subsection{The properties of \rsgenco}
There is a Storage-assisted Renewable GENeration COmpany (\rsgenco) that produces electricity from the renewable sources such as wind farm or solar plant. At the same time \rsgenco is equipped with the storage systems to store the electricity for future commitment with potentially higher price. On the other hand, the storage could be discharged to compensate for the shortage of renewable output when the commitment to the market is beyond the renewable output.
\subsubsection{Renewable Output}
The renewable output of \mbox{\rsgenco} at slot $t$ is denoted by $u(t)$ and we do not rely on any specific stochastic model of $u(t)$.
In general, we assume that \rsgenco does not know the exact amount of $u(t)$ when submitting the offer. Note that $u(t)$ could be (i) directly committed to the market, (ii) be committed to the market partially while the residual is stored on the storage, or (iii) entirely be stored on the storage for future usage.

\subsubsection{Offering Strategy\label{sec:offer2}}
By offering strategy we mean the way that \rsgenco determines its offer that includes:
\begin{enumerate}[(i)]
	\item \textbf{Offering price} denoted as ${\hat{p}(t) \in [p_{\min}, p_{\max}]}$, \textit{i.e.}, the minimum price at which \rsgenco desires to commit electricity to the market.
	\item \textbf{Offering volume} denoted as $\hat{x}(t)\geq 0$, \textit{i.e.}, the amount of electricity in MWh at which \rsgenco offers to the market at slot $t$.\footnote{Note that there is no upper bound for $\hat{x}(t)$ because we assume that the market is big enough to absorb the offering volume of \rsgenco entirely.}
\end{enumerate}
We distinguish between the offering volume and the \textit{commitment volume}. After the clearing price $p(t)$ is revealed, \rsgenco's offer may or may not be successful. If the offer is successful the offering volume is considered as the commitment volume. Otherwise, there would be no commitment. More specifically, we define $x(t)$ as the commitment volume of \rsgenco at slot $t$ as
\begin{equation}
\label{eq:xt}
x(t)=\left \{ \begin{array}{ll}
\hat{x}(t) & \qquad \textrm{if $p(t) \geq \hat{p}(t)$},\\
0 & \qquad \textrm{otherwise}.\\	
\end{array} \right.
\end{equation}
%

The goal of this study is to design an offering strategy for \rsgenco to submit both offering price $\hat{p}(t)$ and offering volume $\hat{x}(t)$ so as to maximize its long-term profit.

\subsubsection{Storage Model}
We denote the maximum capacity of storage system of \rsgenco by $C$ and let $\rho_c$ and $\rho_d$ be its maximum charging and discharge rates, respectively.
In addition, let $z(t)\in [0,C]$ be the storage level at the \textit{beginning} of slot $t$.
 Given the renewable output $u(t)$ and the commitment volume $x(t)$, the evolution of the storage level of \rsgenco is given by
\begin{equation}
\label{eq:et}
z(t+1)= \Big[z(t) + x_c(t) - x_d(t) \Big]_{\mathcal{C}},
\end{equation}
where
\begin{equation}
\label{eq:xc}
x_c(t) = \min\Big\{\rho_c,\big[u(t)-x(t)\big]^+\Big\},
\end{equation}
is the charging amount of the storage at slot $t$,
\begin{equation}
\label{eq:xd}
x_d(t) = \min\Big\{\rho_d,\big[x(t)-u(t)\big]^+\Big\},
\end{equation}
and $x_d(t)$ is the discharging amount of the storage at slot $t$.
 Moreover, $[.]^+$ and $[.]_{\mathcal{C}}$ define the projections onto the positive orthant and set $\mathcal{C} = [0,C]$, respectively.
Since \rsgenco is empowered by the storage, the commitment volume $x(t)$ might be either greater, less, or equal to the renewable output $u(t)$. The evolution of the storage for each case is as follows:
	
	(i) $u(t) = x(t)$: in this case the entire renewable output is committed to the market and there is no change on the storage level, \textit{i.e.}, $z(t+1)=z(t)$.
	
	(ii) $u(t) > x(t)$: in this case, $u(t)-x(t) > 0$ represents the amount of the surplus in the renewable output. Ideally, this surplus must be charged into the storage for the forthcoming commitments. However, because of the charging rate $\rho_c$ it may not be possible, which is indeed captured in Eq.~\eqref{eq:xc}.
	
	(iii) $u(t) < x(t)$: in this case not only the entire renewable output is committed, but also the storage should contribute in fulfilling the residual commitment, \textit{i.e.}, $x(t)-u(t) >0$. Again, given the storage level $z(t)$, and the discharge rate $\rho_d$, the residual commitment may not be satisfied.

\subsubsection{Over-Commitment}
Recall that \rsgenco does not know the exact renewable output $u(t)$ when submitting the offer, hence, it may fail to fulfill its commitment, which we refer to as the over-commitment. Let us denote $y(t)$ as the over-commitment volume at $t$ expressed as
\begin{equation}
\label{eq:yt}
y(t) = \bigg[x(t)-\Big(u(t)+\min \big\{z(t),\rho_d\big\}\Big)\bigg]^+.
\end{equation}
Note that the maximum amount that \rsgenco can provide to the market in operation time $t$ is the aggregation of the renewable output $u(t)$ and the maximum amount that could be discharged from the storage, \textit{i.e.}, $\min \big\{z(t),\rho_d\big\}$. Since $u(t)$ is unknown to \rsgenco when submitting the offer, the commitment volume $x(t)$ might be greater than the amount that \rsgenco can really output, \textit{i.e.}, $u(t)+\min \big\{z(t),\rho_d\big\}$.

\subsection{Profit Model}	
By committing $x(t)$, the profit obtained by \rsgenco is $p(t)x(t)$.
 The consequence of over-commitment is captured in profit model by augmenting a penalty term. We adopt the penalty model in \cite{kim2011optimal}, in which the unit penalty payment in over-commitment is linearly proportional to the spot price $p(t)$ in the form of $\alpha_1p(t)+\alpha_2$,
where $\alpha_1, \alpha_2\geq 0$ are constants.

Concluding above, the net profit obtained by \rsgenco at slot $t$, denoted by $R(t)$, is expressed as total profit subtracted by the (potential) penalty of the over-commitment, \textit{i.e.},
\begin{equation}
\label{eq:rt}
R(t)= p(t)x(t)-(\alpha_1 p(t)+\alpha_2)y(t).
\end{equation}




\subsection{Profit Maximization Problem}
\label{sec:prob}
The objective is to maximize the cumulative profit obtained by \rsgenco over time horizon $\mathcal{T}$. The profit maximization offering strategy problem (\csp) is formally casted as
\begin{eqnarray*}
\label{eq:problem}
 \csp & \textrm{max}&\sum\limits_{t\in\mathcal{T}} R(t),
\quad \textrm{s.t. Eq.}~\eqref{eq:et}, \\
&\textrm{vars.}& \quad \hat{p}(t) \in [p_{\min}, p_{\max}], \hat{x}(t) \geq 0, t\in\mathcal{T}.
\end{eqnarray*}
In offline scenario, in which the values of $u(t)$ and $p(t)$ as the time-varying inputs to the problem are known ahead of time, the problem is a linear one which is easy to solve. We refer to \cite{castronuovo2004optimization,garcia2008stochastic} as the related works that study related problems in offline settings. We note that in the offline scenario, the clearing prices are known to
\rsgenco, hence it is not require to submit the offering price anymore. Consequently, the offering strategy reduces to announcing the commitment volume directly. In this way, the problem could be reformulated in an equivalent form with simpler structure.

 In real-world, however,  neither the renewable output $u(t)$ nor the clearing price $p(t)$ are revealed to \rsgenco when submitting the offer. Hence the focus in this paper is to tackle the problem in online setting, so, we formulate the problem in a way that \rsgenco submits both offering price and offering volume.
Solving \csp in online setting is challenging, since the problem is coupled over the time in the presence of the storage system. Recall that an important advantage of incorporating storage towards profit maximization is to (fully or partially) store the renewable supply in the storage when that market price is low, and discharge it when the market price is high. Without knowing the future values of $p(t)$ and $u(t)$, finding a profit maximization offering strategy, that implicitly determines how renewable supply and the stored electricity in the storage must be consumed is challenging.



Finally, we note that in~\cite{kim2011optimal,jiang2015optimal}, following Markov decision process and approximate dynamic programming, different offering strategies are proposed given a particular probabilistic model of the clearing price and the renewable output. In these approaches, the solution is obtained in the sense of probabilistic expectation. In practice, however, real values might deviate from the underlying probabilistic models. Our general approach as explained in Sec.~\ref{sec:ocad} has no assumptions on the stochastic modeling of the unknown time-varying inputs.

\subsection{Our Approach: Online Competitive Algorithm Design}
\label{sec:ocad}
Our approach in this study is to follow competitive online algorithm design and propose online offering strategies in which the decision is made based on \textit{only the current information}, and without any assumptions on the stochastic model on the renewable output and the clearing price.
For the performance analysis, we use  \textit{competitive ratio}~\cite{Borodin98} which is a well-established metric to evaluate how good is the online solution.

\begin{myDef}
	When the underlying problem is a profit maximization one, for an online algorithm $\mathsf{A}$, its competitive ratio (\textsc{cr}) is defined as the maximum ratio between offline optimum and the profit obtained by $\mathsf{A}$, over all inputs, \textit{i.e.,}
	\begin{equation}
	\label{eq:cr}
	\cratio(\mathsf{A})\triangleq \max_{\boldsymbol{\omega} \in \Omega}\frac{R_{\emph{\ofa}}(\boldsymbol{\omega})}{R_{\mathsf{A}}(\boldsymbol{\omega})},
	\end{equation}
	where $\boldsymbol{\omega} \in \Omega$ refers to an instance of the online input parameters as
	\begin{equation}
	\label{eq:omega}
	\boldsymbol{\omega} \triangleq \big[\omega(t) = (u(t), p(t))\big]_{t\in\mathcal{T}},
	\end{equation}
	and $\Omega$ is the set of all input instances.
	Moreover, $R_{\emph{\ofa}}(\boldsymbol{\omega})$ and $R_{\mathsf{A}}(\boldsymbol{\omega})$ are the profits earned by the optimal offline solution and the online algorithm $\mathsf{A}$ respectively, when the input is $\boldsymbol{\omega}$.
\end{myDef}
By this definition, the smaller the competitive ratio, the better the performance is, since it guarantees no matter what the input is, the online optimal strategy obtains the profit close to the offline optimum.

\section{Optimal Online Offering Strategy with Accurate Single-Slot Prediction}
\label{sec:sol}
In this section, we propose online competitive algorithms for a simplified version of \csp, in which the accurate data for the next slot is available for both the renewable supply and the clearing price.
Later in Sec.~\ref{sec:online_nostep}, based on the insights from result of this section on the simplified scenario, we tackle the general case and propose online algorithms with neither the renewable supply nor the clearing price known to \rsgenco when submitting the bids.


\subsection{Simplified Problem  with Accurate Single-Slot Prediction}
\label{sec:online_onestep}
For the sake of simplification in design, we first assume that both $p(t)$ and $u(t)$ values for the next slot are revealed to \rsgenco, perhaps by accurate short-term forecasting tools. In this way, \csp is largely simplified in two ways:
\begin{enumerate}[(i)]
\item since $p(t)$ is known, the offering strategy reduces to finding just the commitment amount $x(t)$. We relax this assumption in Sec.\ref{sec:online_nostep1}.
\item since $u(t)$ is known, all the inputs and variables in Eq.~\eqref{eq:yt} are known to \rsgenco when submitting the offer, hence the over-commitment never happens, thereby the penalty term in the objective of \csp vanishes. We relax this assumption in Sec.~\ref{sec:online_nostep2}.
\end{enumerate}

Since in the new setting the price $p(t)$ is known for the next slot, and to be consistent with the general formulation, we set $\hat{p}(t) = p(t)$ and following Eq.~\eqref{eq:xt}, we get ${x(t) = \hat{x}(t)}$. Then, given the above assumptions, the only optimization variable would be the commitment amount $x(t)$. Now, we cast the simplified offering strategy problem \scsp as
\begin{eqnarray*}
\label{eq:scsp}
\scsp & \max &  \sum_{t\in\mathcal{T}} p(t)x(t) \\
&\mathrm{s.t.}& x(t)\leq \min \{ z(t),\rho_D\}+u(t), \\
&&z(t+1)=\Big[z(t)+ x_c(t)-x_d(t) \Big]_{\mathcal{C}},\\
&\mathrm{var}:& x(t)\geq 0, t\in\mathcal{T},
\end{eqnarray*}
where the first constraint ensures that over-commitment never happens. And the second constrain involves the evolution of the storage level, where $x_c(t)$ and $x_d(t)$ defined in Eqs.~\eqref{eq:xc}-\eqref{eq:xd} represent the charging and discharging amounts at time $t$.

\textbf{Remarks.} (1) Even though \scsp is simplified, it is still challenging and non-trivial in online setting. That is, for \scsp the forthcoming values for two inputs $p(\tau)$ and $u(\tau)$ for ${\tau \in \{t+1,\dots,T\}}$ are unknown, thereby the problem is still online, and again it cannot be decomposed across the time because of the evolution of the storage.

(2) \scsp is closely related to the classical \textit{time series search} and \textit{one-way trading} problems~\cite{Yaniv01,lorenz2009optimal}.\footnote{In one-way trading problem, a trader needs to exchange from one currency to another currency, given a time-varying exchange rates arriving online. The trader can decide to accept the current price or wait for the more attractive prices in future. The time series search problem is also quite similar~\cite{Yaniv01,lorenz2009optimal}.} As compared to these classical problems, \scsp can be characterized as a generalized version in two aspects: (i) \scsp introduces another exogenous parameter $u(t)$ in addition to the clearing price $p(t)$, this means that the adversary has more freedom to choose the worst-case input, this makes the competitive analysis more challenging; (ii) \scsp introduces the capacity constraint associated with the storage system. Putting together both differences, our investigations (see Sec.~\ref{sec:function_design}) demonstrate that neither \textit{fixed-threshold} policy, \textit{i.e.}, the policy that finds a fixed exchange rate and will accept the rates above that, nor the competitive analysis approach proposed in~\cite{Yaniv01} 
yields the optimal solution for \scsp in this paper. Instead, the optimal strategy for \scsp admits an \textit{adaptive-threshold-based} policy that achieves the optimal competitive ratio as a logarithmic function of $\theta$, whose details are stated in the next section.

\subsection{Online Algorithm Design for \scsp}
\label{sec:online_alg}
In this section, we propose a simple online offering strategy (\socs) to solve \scsp. Then we analyze its competitiveness and show that \socs achieves the best competitive ratio.

\subsubsection{High Level Intuitions}
\label{sec:alg_highlevel}
Intuitively, a proper offering strategy must consider two issues in decision making:
\begin{enumerate}[(i)]
	\item the clearing price $p(t)$ for the incoming slot $t$, the higher the price, the more the \rsgenco is willing to commit,
	\item the storage level $z(t)$, if the storage level is almost full, \rsgenco would be more interested to commit to have more capacity for the forthcoming slots to store the electricity. On the other hand, if the storage level is almost unoccupied, \rsgenco might keep this electricity to commit with higher price in future slots.
\end{enumerate}
Putting together both clearing price and the storage level, we design our algorithm following an \textit{adaptive threshold-based} strategy, since it adaptively changes the offering volume based on the current storage level and the clearing price.

\subsubsection{Main Algorithm}
The main idea is to construct a function $g(z):  [0,C] \rightarrow [p_{\min}, p_{\max}]$.\footnote{To be consistent to the notations in this paper, ideally we must denote the function as $g(z(t))$. However, for simplicity we drop the slot index $t$ for $z(t)$.} The input to function $g(\cdot)$ is the aggregation of the incoming renewable supply $u(t)$ and the current storage level $z(t)$, projected into the capacity of the storage, \textit{i.e.}, $\min \{z(t)+u(t), C\}$. The output of $g(\cdot)$ is the \textit{candidate offering price} $\hat{p}(t)$ for \rsgenco at slot $t$. Note that since we assume that \rsgenco knows clearing price $p(t)$, it calculates its candidate offering price $\hat{p}(t)$ and then based on the comparison between these two values decides how to submit its offer.

The function $g(\cdot)$ should be decreasing, \textit{i.e.}, with the increase in the current storage level, the offering price would be decreased (see intuition (ii) above).
Given function $g(z)$,
designing \socs is rather straightforward and is summarized as Algorithm~\ref{alg:1}.
At slot $t$, first, \socs calculates candidate offering price $\hat{p}(t)$ based on the given $g(z)$.
Then, it calculates the offering volume, whose details is explained in Sec.~\ref{sec:offer}.
In Line~\ref{algline:5}, the offering price is set to $\hat{p}(t) = p(t)$ to ensure that $\hat{x}(t)$ is committed to market anyway. In next, we explain how to design function $g(z)$.
\begin{algorithm}[!t]
	
	\caption{---\socs $\big[p(t),u(t), z(t)\big]$}
	\label{alg:1}
	\begin{algorithmic}[1]
		\State $z^+(t) \leftarrow \min \{z(t)+u(t), C\}$
		\State $\hat{p}(t) \leftarrow g(z^+(t))$; see Eqs.~\eqref{eq:gz1} and \eqref{eq:gz} for the optimal design of $g(z)$
		\State calculate $\hat{x}(t)$ according to $\hat{p}(t)$, see Eq.~\eqref{eq:hz} and the following analysis for the optimal design \label{algline:7}
		\State $\hat{p}(t) \leftarrow p(t)$\label{algline:5}
		\State submit offer $\big(\hat{p}(t),\hat{x}(t)\big)$;
	\end{algorithmic}
\end{algorithm}

\begin{figure}[!t]
	\centering
	\begin{minipage}{0.95\textwidth}
    \center
		\subfigure[]{
			\label{fig:a}
			\includegraphics[width=0.3\textwidth]{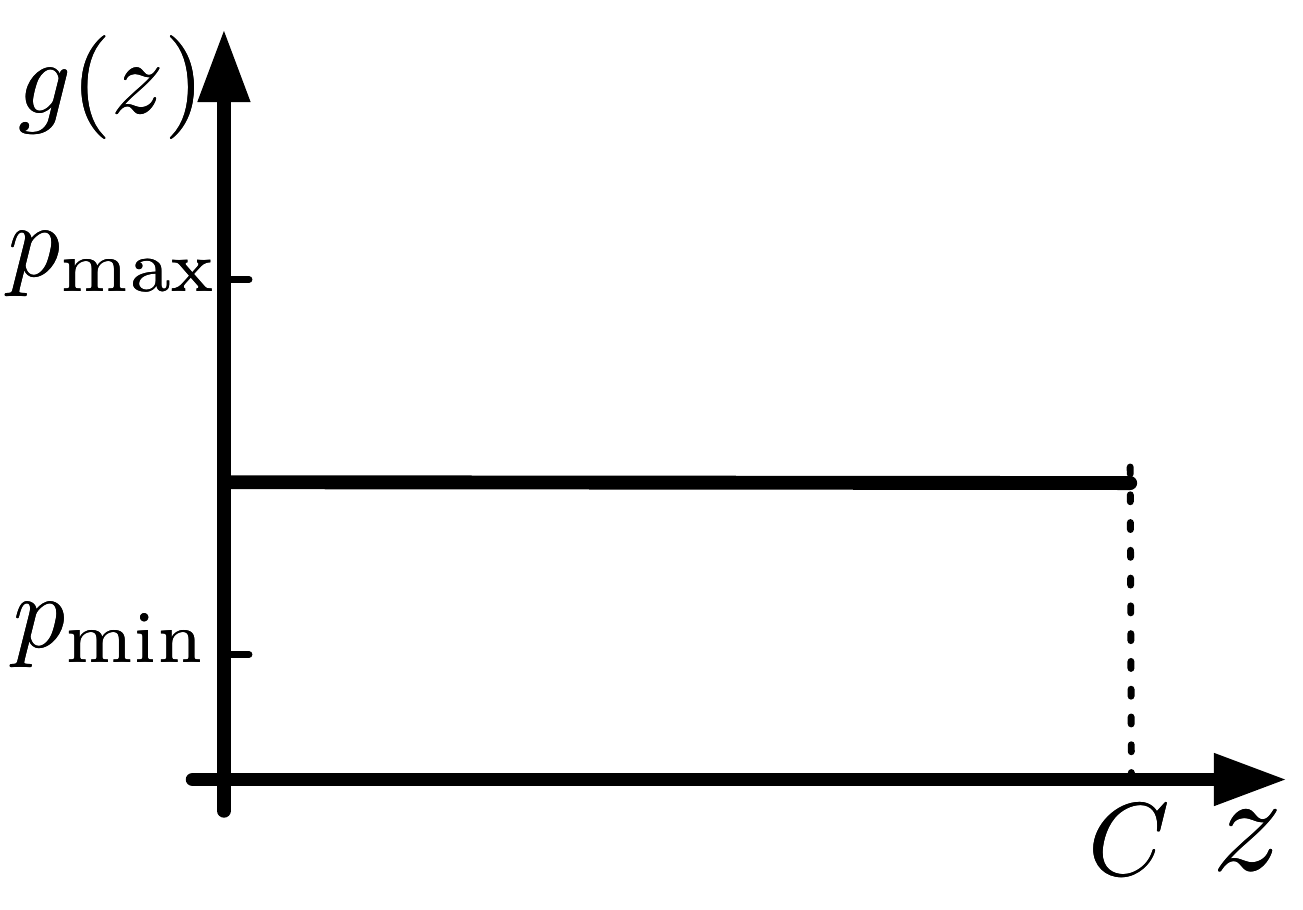}
		}
		\hspace{-5mm}
		\subfigure[]{
			\label{fig:b}
			\includegraphics[width=0.3\textwidth]{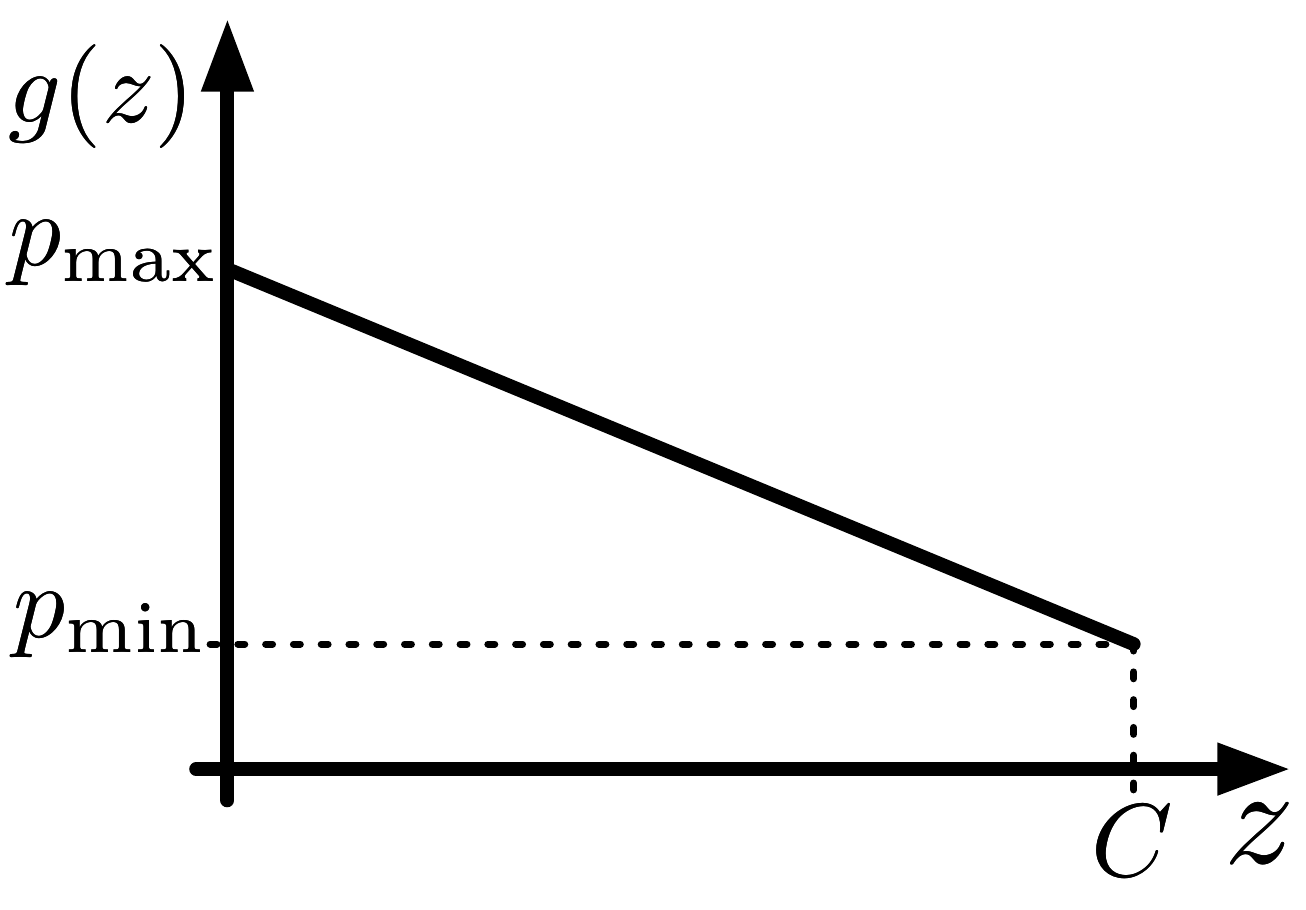}
		}
		\hspace{-5mm}
		\subfigure[]{
			\label{fig:c}
			\includegraphics[width=0.3\textwidth]{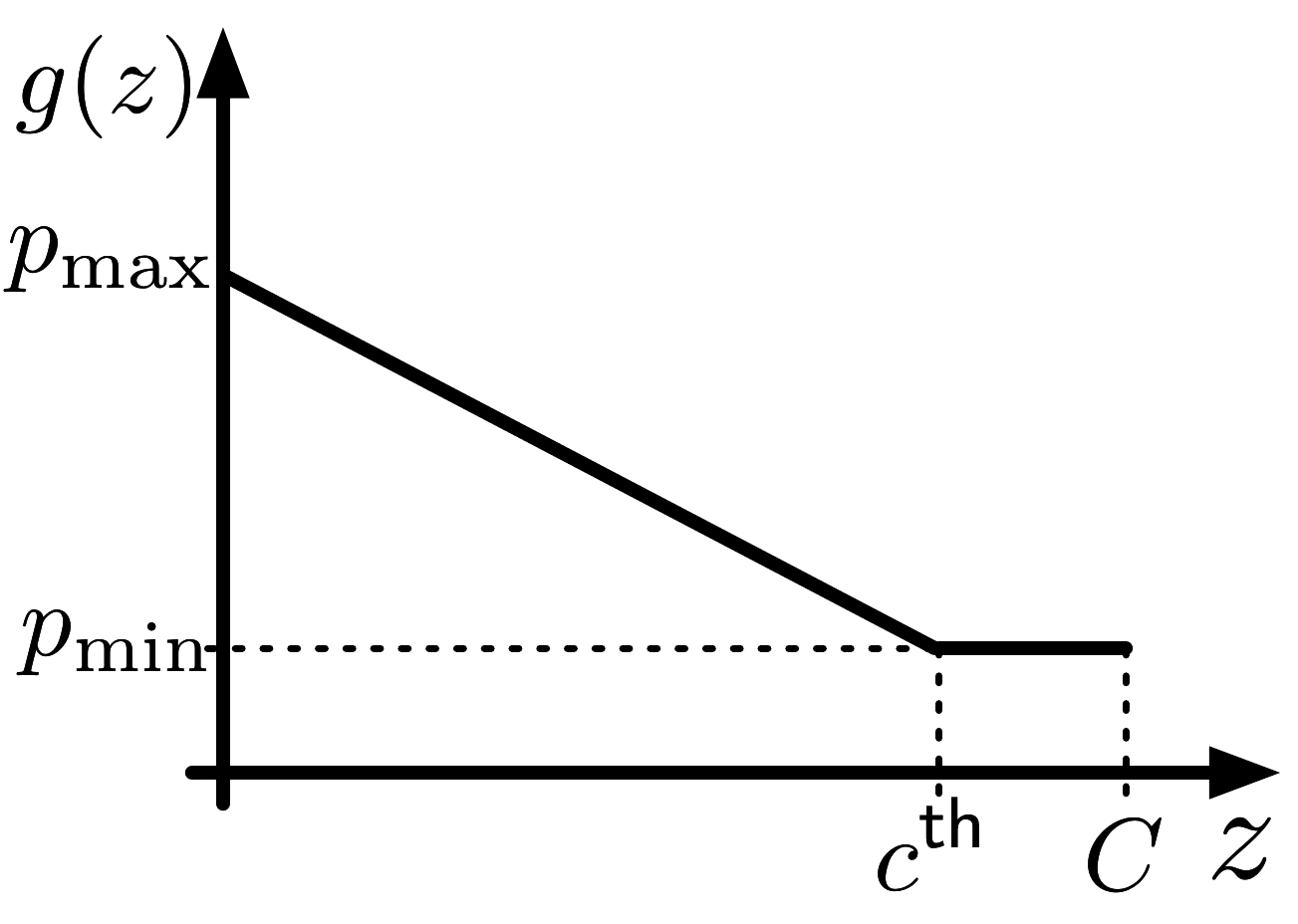}
		}
	\end{minipage}
	\vspace{-4mm}
	\caption{Different structures of function $g(z)$.}
	\label{figs:function_form}
\end{figure}

\subsection{The Design of Function $g(z)$}
\subsubsection{On the Importance of Designing Function $g(z)$}
\label{sec:function_design}
First, we highlight that $g(z)$ plays a critical role in \socs, and the competitive ratio can be improved by optimizing function $g(z)$.
To illustrate the impact of $g(z)$, we investigate the behavior of \socs under different structures of function  $g(z)$.

(i) $g(z) = c$, where $c \in [p_{\min}, p_{\max}]$, is a constant value as shown in Fig.~\ref{fig:a}. If $c >p_{\min}$, for the extreme case where $p(t)<c, t\in\mathcal{T}$, it could be easily shown that the profit obtained by the \socs is always $0$, thereby the competitive ratio would be $\infty$. On the other hand, $c = p_{\min}$, the competitive ratio is upper bounded by $p_{\max}/p_{\min}$, which is not intriguing, since \rsgenco always commits the electricity to the market regardless of the clearing price and, it loses the opportunity of utilizing the potentials of storage in obtaining more profit.

Consequently, an intelligent function design aims to improve the competitiveness by reserving the storage for forthcoming slots with higher clearing price. This goal can be achieved by adopting a decreasing function.

(ii) $g(z) =   a z + p_{\max}$, where $a < 0$, as depicted in Fig.~\ref{fig:b}. Again, under the extreme case, ${}{\cratio(\socs) \rightarrow
	\infty}$ if  ${p(t) = p_{\min}}$. Note that this linear structure for $g(z)$ is one example and any strictly decreasing function with $g(0)=p_{\max}$ and $g(C)=p_{\min}$ behaves similarly in the worst case.

(iii) Another smart alternative is a piece-wise function as depicted in Fig.~\ref{fig:c}. Function $g(z)$ again is decreasing initially. However, after the storage level reaches to a threshold value $c^{\mathsf{th}}$, $g(z)$ changes to a constant function, \textit{i.e.},
\begin{equation}
\label{eq:gz1}
g(z)=\left \{ \begin{array}{ll}
\hat{g}(z) & \qquad \textrm{if $z \le c^{\mathsf{th}}$},\\
p_{\min} & \qquad z\geq c^{\mathsf{th}}.\\	
\end{array} \right.
\end{equation}

Now, finding the optimal function $g(z)$ reduces to finding $\hat{g}(z)$ and $c^{\mathsf{th}}$. In the next, we introduce this function. Moreover in Sec.~\ref{sec:ca}, we analyze the competitiveness of the algorithm and prove that the proposed function achieves the optimal \cratio.
\subsubsection{Optimal Design of Function $g(z)$}
\label{sec:optimla_function_design}

The following theorem summarizes our main contribution in this section.

\begin{theorem}
	\label{thm:cr}
	By setting $\hat{g}(z)$ in Eq.~\eqref{eq:gz1} as
	\begin{equation}
	\label{eq:gz}
	\hat{g}(z) = p_{\min}e^{\frac{(c^{\mathsf{th}}-z)c^{\mathsf{th}}}{C(C-c^{\mathsf{th}})}},
	\end{equation}
	and $c^{\mathsf{th}}$ as
	\begin{equation}
	\label{eq:cth0}
	c^{\mathsf{th}} = C- \frac{(2+\log\theta)C-\sqrt{\log^{2}\theta+4\log\theta}C}{2} > 0,
	\end{equation}
	\emph{\socs} achieves the optimal competitive ratio for \emph{\scsp} as
	\begin{equation}
	\label{eq:cr_closed_form}
	\cratio(\emph{\socs})=\frac{(2+\log\theta)+\sqrt{\log^{2}\theta+4\log\theta}}{2}.
	\end{equation}
\end{theorem}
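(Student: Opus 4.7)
The plan is to prove the theorem in two parts: an \emph{achievability} argument showing \socs with the specified $g$ attains the claimed competitive ratio, and an \emph{optimality} argument that no online algorithm can do better.

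First I would make the offering volume in Line 3 of Algorithm~\ref{alg:1} explicit. The natural choice is the \emph{sell-down-to-threshold} rule: when $p(t)\ge g(z^+(t))$, set $\hat x(t)$ so that after the commitment the effective storage equals $g^{-1}(p(t))$ (truncated at $0$ when $p(t)$ falls in the flat piece of $g$); otherwise set $\hat x(t)=0$. Under this rule I would re-parametrize cumulative profit by storage level to obtain the key identity that \socs's total profit equals $\int g(z)\,dz$ over the portion of $[0,C]$ the algorithm sweeps through while committing, plus a direct term for renewable that is sold without transiting storage. This reformulation decouples the time dimension and makes the worst-case analysis tractable.

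Next I would identify the worst-case input. Standard reasoning for threshold-based rules shows the adversary's optimal play against \socs is to first supply $u(t)$ filling storage while prices track the threshold $g(z^+(t))$ up to some stopping level $z^\star\in[0,c^{\mathsf{th}}]$, then stay below threshold so \socs halts selling, and finally reveal a single slot at $p_{\max}$ large enough that the offline can commit the entire retained capacity. On such instances $R_\ofa(\boldsymbol{\omega})=p_{\max}\cdot C$ while $R_\socs(\boldsymbol{\omega})$ equals an integral against $g$ from $z^\star$ to $C$ plus a small spike contribution, so the worst-case ratio becomes an explicit function of $z^\star$ and the shape of $g$. The central structural claim is an \emph{equal-threat} property: the exponential $\hat g$ in Eq.~\eqref{eq:gz}, subject to the boundary conditions $g(0)=p_{\max}$ and $g(c^{\mathsf{th}})=p_{\min}$, is the unique shape for which the adversary's ratio is constant in $z^\star$; this is derived by imposing $(d/dz^\star)[R_\ofa/R_\socs]=0$, which yields a first-order ODE whose solution is exactly the exponential of Eq.~\eqref{eq:gz}. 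Any other shape meeting the boundary conditions would give the adversary a strictly worse $z^\star$ to exploit.

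With the exponential shape fixed, the worst-case ratio reduces to a one-variable function of $c^{\mathsf{th}}$. Setting its derivative to zero yields a quadratic whose positive root matches Eq.~\eqref{eq:cth0}, and substituting back produces the closed form Eq.~\eqref{eq:cr_closed_form}. For the optimality half I would construct a matching lower bound by adapting the classical one-way-trading adversary: for any online $\mathsf{A}$, the adversary first loads storage with $u(t)$ while slowly ramping $p(t)$; at each decision point it adaptively either keeps ramping or halts and reveals a $p_{\max}$ spike, choosing whichever branch hurts $\mathsf{A}$ more, and the usual exchange argument forces $\mathsf{A}$'s ratio to be at least Eq.~\eqref{eq:cr_closed_form}. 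The main obstacle I anticipate is making the equal-threat derivation fully rigorous with the two-piece structure of $g$ and the storage constraints --- specifically, verifying that the worst $z^\star$ genuinely lies in the exponential piece, that the spike term does not dominate the analysis, and that the charging/discharging rate constraints $\rho_c,\rho_d$ do not invalidate the integral reformulation (which is why the algorithm works with the \emph{effective} quantity $z^+(t)=\min\{z(t)+u(t),C\}$).
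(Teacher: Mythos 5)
Your overall architecture --- reparametrize profit by storage level, impose an ``equal-threat'' balancing condition that forces an ODE whose solution is the exponential, then optimize over $c^{\mathsf{th}}$ --- is the continuous analogue of what the paper does discretely (partition instances by the minimum storage level reached, show via KKT conditions in Lemma~\ref{lem:nes_con} that all local competitive ratios must be equal, then take $n\to\infty$ to recover Eq.~\eqref{eq:gz} and solve for $l_n=C-c^{\mathsf{th}}$). Your added sketch of a lower bound over \emph{all} online algorithms is actually more than the paper establishes, since its optimality argument only ranges over threshold functions $g$.

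However, your identification of the worst-case instance contains a genuine error that would break the derivation of the specific constants. You have the adversary finish with a $p_{\max}$ spike so that $R_{\ofa}=p_{\max}C$. But \socs's threshold never exceeds $g(0)=p_{\max}$, so the moment that spike arrives \socs liquidates its entire retained storage $z^{\star}$ at $p_{\max}$ as well; the spike therefore \emph{helps} the online algorithm, and the resulting ratio $p_{\max}C/\bigl(\int_{z^{\star}}^{C}g+p_{\max}z^{\star}\bigr)$ is monotone in $z^{\star}$ --- the balancing condition $\tfrac{d}{dz^{\star}}[R_{\ofa}/R_{\socs}]=0$ degenerates to $g(z^{\star})=p_{\max}$ and does not produce the exponential of Eq.~\eqref{eq:gz}. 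The correct adversary (the paper's Lemma~\ref{lem:mp}) never lets the price exceed the threshold $g(z^{\star})$ at the minimum storage level reached, so the offline benchmark is roughly $g(z^{\star})\cdot C$, not $p_{\max}\cdot C$. Moreover, even with that correction, balancing the ``bare'' ratio $g(z^{\star})C/\int_{z^{\star}}^{C}g$ yields decay rate $1/(C-c^{\mathsf{th}})$ rather than the paper's $c^{\mathsf{th}}/\bigl(C(C-c^{\mathsf{th}})\bigr)$: the paper's local ratio in Lemma~\ref{lem:lcr} carries an extra numerator term $\sum_{k=i+1}^{n-1}p_k l_k$ coming from renewable inflow that the offline also harvests, and this term is needed to reproduce Eqs.~\eqref{eq:gz}--\eqref{eq:cr_closed_form} exactly. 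Your sketch would need to rebuild the worst-case family to include this inflow accounting before the equal-threat ODE gives the stated closed forms.
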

The proof is given as the competitive analysis in Sec.~\ref{sec:ca}.

\textbf{Remark.} The theorem shows that the competitive ratio is proportional to a logarithmic function of $\theta$ as the ratio between the maximum and minimum clearing prices. In practice, the scale of $\theta$ varies from $2$ to $50$ in different markets, \textit{e.g.}, the clearing prices in PJM~\cite{PJM} and NYISO~\cite{NYISO} in July, 2015 is in $[\$13.9,\$186.9]$ and $[\$8.1,\$43.1]$ per MWh. Given $\theta=50$, the competitive ratio is around $5.74$. Our experimental results depict much lower empirical ratios using the real prices in different markets. For details we refer to Table~\ref{tbl:comp}.


\begin{figure}[!t]
	\begin{center}
		\includegraphics[width=0.25\textwidth]{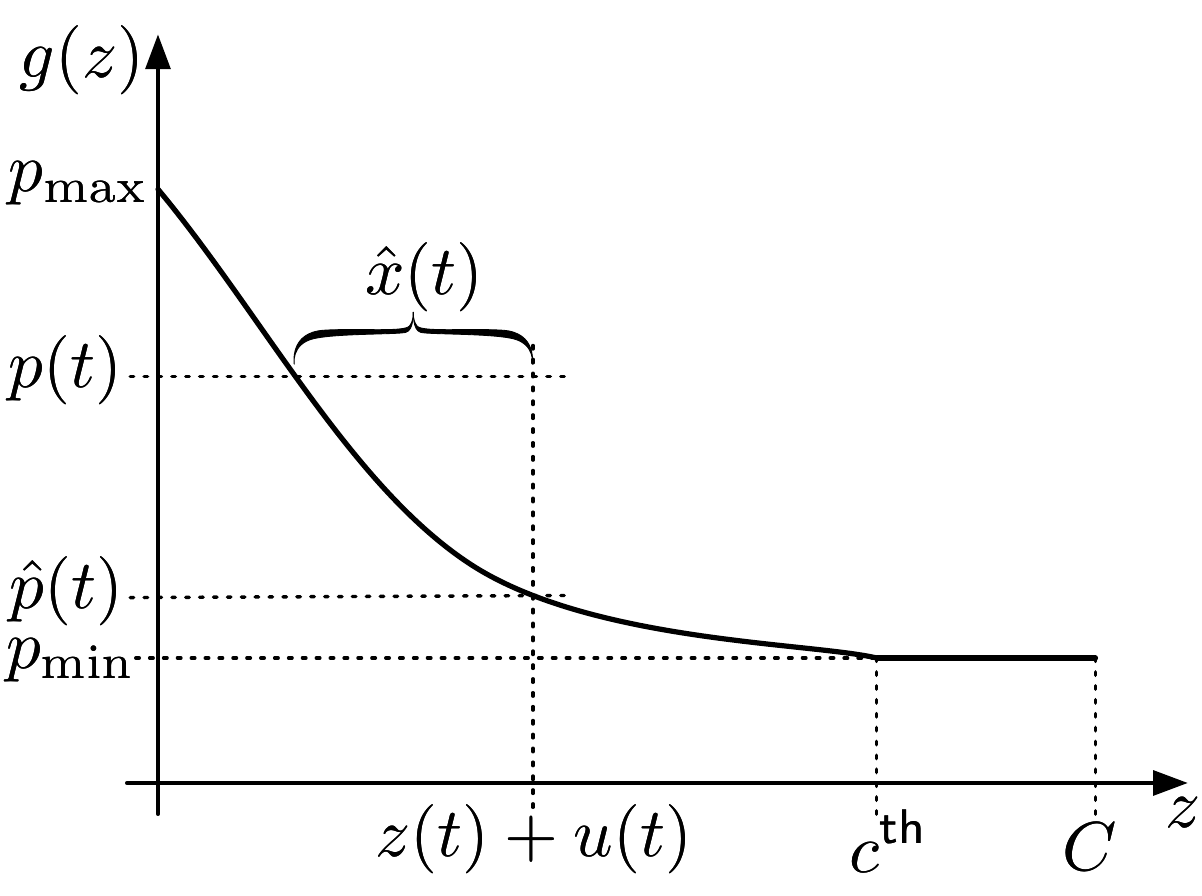}
	\end{center}
	\vspace{-4mm}
			\caption{Illustration of calculating the offering volume $\hat{x}(t)$ when $p(t)>\hat{p}(t)$. }
\label{fig:_bidding_amount}
\end{figure}

\subsubsection{Calculating the Offering Volume}
\label{sec:offer}
Given the optimal function $g(z)$ in Theorem~\ref{thm:cr}, we can finally find the offering volume $\hat{x}(t)$ as follows.
\begin{equation}
\begin{split}
\label{eq:hz}
&\hat{x}(t)= \\
&\!\!\left \{ \begin{array}{ll}
\!\![\rho_c - u(t)]^+, &  \!\!\!\!\textrm{if $\hat{p}(t) \!>\! p(t)$},\\
\!\!z(t)\!+\!u(t) \!-\! \min\{c^{\mathsf{th}},z(t)\!+\!\rho_{c}\}, &  \!\!\!\!\textrm{if $\hat{p}(t) \!=\! p(t) \!=\! p_{\min}$},\\
\!\!z(t) \!+\! u(t) \!- \!\min\{\hat{g}^{-1}(p(t)),z(t)\!+\!\rho_{c}\}, &  \!\!\!\!\textrm{if $\hat{p}(t) \!<\! p(t)$},\\
\end{array} \right.
\end{split}
\end{equation}
where $\hat{g}^{-1}(p(t))$ is well-defined since $\hat{g}(z)$ defined in Eq.~\eqref{eq:gz1} is monotonically decreasing. More specifically, when ${\hat{p}(t) > p(t)}$, \rsgenco stores the renewable supply as much as it can, considering the charging rate constraint $\rho_c$, then, the residual is committed to the market. If $\hat{p}(t) = p(t) = p_{\min}$, it means that the storage level exceeds the threshold level $c^{\mathsf{th}}$, so, \rsgenco offers the minimum price $p_{\min}$. Since the market price is also $p_{\min}$ it commits the electricity until the storage level reaches the threshold $c^{\mathsf{th}}$ or $z(t)\!+\!\rho_{c}$, \textit{i.e.}, $\hat{x}(t) = z(t)\!+\!u(t) \!-\! \min\{c^{\mathsf{th}},z(t)\!+\!\rho_{c}\}$. Finally, the last situation is $\hat{p}(t) < p(t)$, and $p(t) > p_{\min}$. In this case, we are in the exponential part of function  $g(z)$ and $\hat{x}(t)$ is calculated as the total supply $u(t) + z(t)$ subtracted by $\min\{\hat{g}^{-1}(p(t)),z(t)\!+\!\rho_{c}\}$. An illustration of the offering volume in this case when $\hat{g}^{-1}(p(t))\leq z(t)\!+\!\rho_{c}$ is depicted in Fig.~\ref{fig:_bidding_amount}.
Finally, to capture the maximum discharge rate of the storage, it suffices to modify offering volume as $\hat{x}(t) =  \min \{ u(t) + \rho_d, \hat{x}(t)\}.$

\subsection{Competitive Analysis}
\label{sec:ca}
Our goal in this section is to design function $g(z)$ (especially $\hat{g}(z)$ and $c^{\mathsf{th}}$ in Eq.~\eqref{eq:gz1}) so as to minimize the competitive ratio. By doing so we prove the result in Theorem~\ref{thm:cr}.
To simplify our explanation in this section and without loss of generality, we quantize the energy-related variables $x(t)$ and $z(t)$ and the input parameters $u(t)$ and $C$ to take the integer values. More specifically, considering that $C$ is a real number, we can define discretized capacity $C_{\textsf{d}}$ as
\begin{equation}
\label{eq:cd}
C_{\textsf{d}} = C/\eta,
\end{equation}
where $\eta$ is the unit of electricity. In this way, $C_{\textsf{d}}$ belongs to integer numbers. To avoid notation complexity, however, in our analysis in this section, we abuse notation $C$ to denote the discretized $C_{\textsf{d}}$. Similarly, we abuse $x(t), z(t)$, and $u(t)$ as their discretized versions.
Finally, we assume that initially the storage is full.

An immediate consequence of the above discretization is that function $g(z)$ could be considered as a step function.
We assume that there are $n$ steps for $g(z)$, each of which indexed by $i$, and $i\in\mathcal{N} = \{1,2,\ldots,n\}$. Generally, the length of each step could be different, hence, let us denote by $l_{i}$ as the length of step $i$. We can characterize the threshold value $c^{\textsf{th}}$ defined in Eq.~\eqref{eq:gz1} as the storage capacity subtracted by the length of the last step, \textit{i.e.},
\begin{equation}
\label{eq:cth}
c^{\textsf{th}} = C - l_n.
\end{equation}
Moreover, we define ${b_{i}=\sum_{k=1}^{i}l_{k}}$ as the cumulative length until step $i$. Since $z \in [0,C]$ we get $C=b_{n}=\sum_{k=1}^{n}l_{k}$. Finally, we denote $p_{i}=g(b_{i})$ as the threshold price when the storage level is $b_i$ under function $g(\cdot)$. By this definition, we get $p_{1}=p_{\max}$ and  $p_{n}=p_{\min}$.

By $z^{\socs}_{\boldsymbol{\omega}}(t)$, we denote the storage level at slot $t$ under \socs, and a particular instance $\boldsymbol{\omega}$ as defined in Eq.~\eqref{eq:omega}.
Let us denote the minimum storage level that \socs reaches under $\boldsymbol{\omega}$ as $b \in\{0,1,\ldots,C\}$, \textit{i.e.}, $\min_{t\in\mathcal{T}}z_{\boldsymbol{\omega}}^{\socs}(t)=b$. Then, using this definition, we can partition the universal set of input instances $\Omega$ to multiple subsets as follows
\begin{equation*}
\Omega =\bigcup_{b\in \{0,1,\ldots,C\}}\Omega_{b}^{\socs},
\end{equation*}
where $\Omega_{b}^{\socs}\triangleq \left\{\boldsymbol{\omega} \in\Omega: \min_{t\in\mathcal{T}} z^{\socs}_{\boldsymbol{\omega}}(t)=b\right\}.$

In particular, subset $\Omega_{b}^{\socs}$ represents the coalition of all input instances that results in the minimum storage level $b$ upon executing the deterministic online algorithm $\socs$.

\begin{myDef}
	\label{def:lcr}
Define the local competitive ratio $\cratio_{b}({\emph{\socs}})$ of ${\emph{\socs}}$ under the subset of input instances $\Omega_{b}^{{\emph{\socs}}}$ as
\begin{equation}
\cratio_{b}({{\emph{\socs}}})\triangleq\max_{\boldsymbol{\omega}\in\Omega_{b}^{{\emph{\socs}}}}\frac{R_{\emph{\ofa}}(\boldsymbol{\omega})}{R_{{\emph{\socs}}}(\boldsymbol{\omega})}.
\end{equation}
\end{myDef}
Given Definition~\ref{def:lcr}, we can redefine $\cratio(\socs)$ as follows.
\begin{myDef}
Define \mbox{\cratio(\emph{\socs})} as the maximum of $\cratio_{b}({{\emph{\socs}}})$ over all subsets $\Omega_{b}^{{\emph{\socs}}},b\in \{0,1,\ldots,C\}$, \textit{i.e.},
\begin{equation}
\label{eq:gcr}
\cratio({{\emph{\socs}}}) = \max_{b\in\{0,1,\ldots,C\}} \cratio_{b}({{\emph{\socs}}}).
\end{equation}
\end{myDef}

The following lemma characterizes a closed-form of $\cratio_{b}(\socs)$ for $b\in \{b_{1},b_{2},\ldots,b_{n-1}\}$.
\begin{lemma}
	\label{lem:lcr}
For \emph{\socs}, if ${\cratio_{b_{i}}(\emph{\socs})\geq \cratio_{b_{k}}(\emph{\socs})},k\in\{i+1,i+2,\ldots,n-1\}$ and $i\leq n-1$, then we have:
\begin{equation}
\label{eq:lcr_closed_form}
\cratio_{b_{i}}(\emph{\socs})=\frac{p_{i}C+\sum\nolimits_{k=i+1}^{n-1}p_kl_k}{\sum\nolimits_{k=i+1}^{n}p_kl_k}.
\end{equation}
otherwise,
\begin{equation}
\label{eq:lcr_closed_form_1}
\cratio_{b_{i}}(\emph{\socs})\geq\frac{p_{i}C+\sum\nolimits_{k=i+1}^{n-1}p_kl_k}{\sum\nolimits_{k=i+1}^{n}p_kl_k}.
\end{equation}
\end{lemma}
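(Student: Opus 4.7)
The plan is to prove the lemma in two steps: (i) exhibit an explicit adversarial instance $\omega^\star \in \Omega_{b_i}^{\socs}$ whose ratio realizes the claimed right-hand side, establishing \eqref{eq:lcr_closed_form_1} in full generality; and (ii) upgrade \eqref{eq:lcr_closed_form_1} to the equality \eqref{eq:lcr_closed_form} under the monotonicity hypothesis by a matching upper bound on $R_{\ofa}(\omega)/R_{\socs}(\omega)$ over every $\omega \in \Omega_{b_i}^{\socs}$. The construction exploits the step structure of $g(\cdot)$: an adversary first drives \socs's storage down through the steps at their matching thresholds, then exploits the slack just below each threshold by posting prices marginally less than $p_k$, either alone or paired with renewable pulses of size $l_k$ that \socs is obliged to charge but OFA can sell.

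\textbf{Extremal construction.} I build $\omega^\star$ in three phases. In Phase~1, for $k=n,n-1,\ldots,i+1$ in that order, I feed $\lceil l_k/\rho_d\rceil$ consecutive slots with $u(t)=0$ and $p(t)=p_k$; at every such slot the algorithm's threshold $\hat p(t)=g(z(t))$ equals $p_k$, so \socs commits rate-limited chunks totalling $l_k$ at price $p_k$, and its storage descends from $C=b_n$ down to $b_i$, yielding $R_{\socs}(\text{Phase 1})=\sum_{k=i+1}^{n} p_k l_k$. In Phase~2, I issue $\lceil C/\rho_d\rceil$ slots with $u(t)=0$ and $p(t)=p_i-\epsilon$; since $g(b_i)=p_i>p_i-\epsilon$, \socs commits nothing, while OFA drains its entire initial storage $C$ at price arbitrarily close to $p_i$. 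In Phase~3, for $k=i+1,\ldots,n-1$ in increasing order, I issue one slot with $u(t)=l_k$ and $p(t)=p_k-\epsilon$ (split into sub-slots of size $\rho_c$ if needed to respect the charging rate); by induction on $k$, \socs's storage at the start of the $k$-th such slot is $b_{k-1}$, so $z^+(t)=b_k$ and $\hat p(t)=g(b_k)=p_k>p_k-\epsilon$, which makes \socs merely charge the renewable, whereas OFA sells $l_k$ units at near-$p_k$. Taking $\epsilon\to 0$ yields $R_{\socs}(\omega^\star)=\sum_{k=i+1}^n p_k l_k$ and $R_{\ofa}(\omega^\star)=p_i C+\sum_{k=i+1}^{n-1}p_k l_k$, and $\omega^\star\in\Omega_{b_i}^{\socs}$ since \socs's storage reaches $b_i$ at the end of Phase~1 and never drops below it afterwards. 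This establishes \eqref{eq:lcr_closed_form_1}.

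\textbf{Matching upper bound and main obstacle.} For the equality case I must show $\cratio_{b_i}(\socs)\le \bigl(p_iC+\sum_{k=i+1}^{n-1}p_k l_k\bigr)\big/\sum_{k=i+1}^n p_k l_k$ for every $\omega\in\Omega_{b_i}^{\socs}$. The denominator bound $R_{\socs}(\omega)\ge\sum_{k=i+1}^n p_k l_k$ is the easier direction: because $\min_t z^{\socs}(t)=b_i$, the net discharge of \socs's storage across each level $b_k$ with $k\in\{i+1,\ldots,n\}$ totals at least $l_k$, and the algorithm only discharges across that level at slots whose clearing price is at least $g(b_k)=p_k$. The main obstacle is the upper bound on $R_{\ofa}(\omega)$: one must rule out adversaries that, by exploiting the rate limits $\rho_c,\rho_d$, the freedom in renewable arrivals, or clever interleavings of high- and low-price slots, could obtain more than $p_iC+\sum_{k=i+1}^{n-1}p_k l_k$ while still keeping \socs's minimum storage at exactly $b_i$. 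I would carry this out via a charging argument: decompose OFA's commits according to which level $b_k$ of \socs's storage ``supplies'' them, bound each level's contribution by the extremal pattern realized in $\omega^\star$, and invoke the monotonicity hypothesis $\cratio_{b_k}\le\cratio_{b_i}$ for $k>i$ to absorb contributions from \socs-excursions shallower than $b_i$. Handling the boundary case $\hat p(t)=p(t)$ and the rate-limited discharge/charge dynamics cleanly will be the most delicate technical step.
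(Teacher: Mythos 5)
Your extremal construction is sound and, modulo discretization conventions at the step boundaries, it is exactly the worst case that the paper's proof works with implicitly (\ofa{} holds its full storage $C$ until a price just below $p_i=g(b_i)$ appears, earning $p_iC$, then absorbs renewable pulses of size $l_k$ at prices just below $p_k$ while \socs{} silently recharges; \socs{} earns only the Phase-1 revenue $\sum_{k=i+1}^{n}p_kl_k$). So the inequality \eqref{eq:lcr_closed_form_1} is established. Your lower bound on $R_{\socs}$ over all of $\Omega_{b_i}^{\socs}$ (each unit in the band $(b_{k-1},b_k]$ is discharged at price at least $g(b_k)=p_k$) is also correct and matches the paper.

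The genuine gap is the matching upper bound on $R_{\ofa}(\omega)$ over all $\omega\in\Omega_{b_i}^{\socs}$, which is the entire content of the equality \eqref{eq:lcr_closed_form} and the only place the hypothesis $\cratio_{b_i}\ge\cratio_{b_k}$ for $k>i$ enters; you correctly identify this as the main obstacle but then only state a one-sentence plan (``decompose OFA's commits according to which level supplies them \ldots invoke the monotonicity hypothesis to absorb shallower excursions'') without executing it, and you explicitly leave the rate limits and the boundary case $\hat p(t)=p(t)$ unresolved. The paper closes this gap in two steps that your sketch does not contain. First, a replacement argument shows that in a worst-case instance no clearing price exceeds $g(b_i)+\sigma$: any slot with a higher price can be replaced by a descending staircase of prices tracking $g$ along \socs's storage trajectory, which strictly lowers $R_{\socs}$ without changing $R_{\ofa}$, so the worst case never contains such a slot and $R_{\ofa}$ is capped at $p_iC$ plus the revenue from renewable arrivals. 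Second, it shows that the two remaining ways an adversary could inflate $R_{\ofa}$ --- a charging interval occurring before the storage settles at its minimum, or a commitment interval occurring after it --- each have a ``local'' profit ratio which, if it exceeded the claimed bound, would certify $\cratio_{b_k}(\socs)>\cratio_{b_i}(\socs)$ for some $k>i$, contradicting the hypothesis; hence under the hypothesis the worst case has neither, and $R_{\ofa}\le p_iC+\sum_{k=i+1}^{n-1}p_kl_k$ follows. Until you carry out an argument of this kind (or your proposed charging scheme in full detail), you have proved only \eqref{eq:lcr_closed_form_1}, not the equality that feeds into Eq.~\eqref{eq:function_optimization} and hence into Theorem~\ref{thm:cr}.
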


Through the proof of \ref{lem:lcr}, we can justify that the competitive ratio takes maximum value only among the subsets with ${b = b_i}$, ${i\in\{1,2,\ldots,n-1\}}$,
\emph{i.e.}, $${\cratio(\socs) = \max_{b\in\{b_1,b_{2},\ldots,b_{n-1}\}} \cratio_{b}(\socs)}.$$

Using the result in Lemma \ref{lem:lcr}, we get the global competitive ratio of $\socs$ under universal set of instances as follows
\begin{equation}
\label{eq:function_optimization}
\cratio(\socs)=\max_{i \in \{1,2,\ldots,n-1\}} \frac{p_{i}C+\sum\nolimits_{k=i+1}^{n-1}p_{k}l_{k}}{\sum\nolimits_{k=i+1}^{n}p_{k}l_{k}}.
\end{equation}

Our goal is to achieve the minimum possible value for $\cratio(\socs)$. Our design space toward this goal is to find: (i) the optimal value for $p_i$ which directly characterizes function $g(z)$, recall that by definition $p_i = g(b_i)$, and (ii) $l_n$ as the length of the last step in function which characterizes the threshold level $c^{\textsf{th}}$, recall that we have  $c^{\textsf{th}} = C - l_n$.

The following lemma states that the minimum global competitive ratio is achieved when the value of local competitive ratios are all equal.
\begin{lemma}
	\label{lem:nes_con}
Given a fixed $l_{1},l_{2},\ldots,l_{n}$, $\cratio(\emph{\socs})$ minimizes only if the following expression holds:
\begin{equation*}
\frac{p_{n\!-\!1}C}{p_{n}l_{n}}=\frac{p_{n\!-\!2}C+p_{n\!-\!1}l_{n\!-\!1}}{p_{n}l_{n}+p_{n\!-\!1}l_{n\!-\!1}}=\cdots=\frac{p_{1}C+\sum\nolimits_{k=2}^{n-1}p_{k}l_{k}}{\sum\nolimits_{k=2}^{n}p_{k}l_{k}}.
\end{equation*}
\end{lemma}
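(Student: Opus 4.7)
My plan is to reformulate the minimax problem as a standard constrained minimization and apply KKT conditions. Introducing an auxiliary scalar $r$, I would minimize $r$ subject to $N_i - r D_i \le 0$ for $i = 1, \ldots, n-1$, where $N_i = p_i C + \sum_{k=i+1}^{n-1} p_k l_k$ and $D_i = \sum_{k=i+1}^{n} p_k l_k$, optimizing over $r$ and the free prices $p_2, \ldots, p_{n-1}$ (with $p_1 = p_{\max}$ and $p_n = p_{\min}$ held fixed by design). Assigning multipliers $\mu_i \ge 0$ and writing the Lagrangian $L = r + \sum_i \mu_i(N_i - r D_i)$, I expect the stationarity conditions
\[
\sum_i \mu_i D_i = 1 \quad (\partial_r L = 0), \qquad \mu_j C \;=\; l_j(r-1)\sum_{i<j}\mu_i \quad (\partial_{p_j} L = 0,\ j = 2,\ldots,n-1),
\]
because for $i = j$ only $N_j$ depends on $p_j$ (contributing $C$), while for $i < j$ both $N_i$ and $D_i$ depend on $p_j$ with identical coefficient $l_j$ (contributing $l_j(1-r)$ after the factor $-r$).

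Since $r > 1$ for any nontrivial instance, the second identity immediately yields the equivalence $\mu_j = 0 \Leftrightarrow \mu_1 = \cdots = \mu_{j-1} = 0$. I would then cascade: if $\mu_1 = 0$, the $j=2$ identity forces $\mu_2 = 0$, the $j=3$ identity forces $\mu_3 = 0$, and so on, giving $\sum_i \mu_i D_i = 0$ and contradicting the first stationarity equation. Hence $\mu_1 > 0$; running the cascade forward then gives $\mu_j > 0$ for every $j$. Complementary slackness $\mu_i(N_i - r D_i) = 0$ then forces every inequality to be tight, so $r_1 = r_2 = \cdots = r_{n-1} = r$, which is exactly the chain of equalities claimed in the lemma.

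The main obstacle I anticipate is verifying that the KKT conditions apply here, since the constraints $N_i - r D_i \le 0$ are not jointly affine in $(r, p)$; a constraint qualification such as LICQ at the candidate optimum must be checked. The gradient structure is close to triangular, so LICQ should hold, but the verification needs care. A fallback that sidesteps KKT entirely is a direct perturbation argument: assuming not all $r_i$ are equal at optimality, identify the max-achieving index set and construct an explicit simultaneous perturbation of the corresponding $p_j$'s, compensated by shifts at lower indices, that strictly decreases $\max_i r_i$ and thereby contradicts optimality. This elementary route requires more index bookkeeping than the KKT cascade, which is why I would attempt the KKT route first.
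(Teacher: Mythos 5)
Your proposal is correct and follows essentially the same route as the paper: reformulate the min--max as minimizing an auxiliary scalar subject to the ratio constraints, write the Lagrangian, use the triangular structure of the stationarity conditions $\mu_j C = l_j(r-1)\sum_{i<j}\mu_i$ to cascade to $\mu_j>0$ for all $j$, and invoke complementary slackness to force every constraint tight. Your added attention to the constraint qualification (which the paper glosses over by only checking convexity of the constraints) is a reasonable refinement but does not change the argument.
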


%
%
%

Using the result in Lemma~\ref{lem:nes_con} and by straightforward calculations, we can express $l_{i}$ as
\begin{equation}
\label{eq:l_i}
l_{i}=\frac{p_{i-1}-p_{i}}{p_{i}}\frac{p_{n}Cl_{n}}{p_{n-1}C-p_{n}l_{n}}, i\in\{2,3,\ldots,n-1\}.
\end{equation}
Given $C = \sum_{i=1}^{n} l_i$, and combining with \eqref{eq:l_i}, we get
\begin{equation*}
C=l_{1}\!+\!\frac{p_{n}Cl_{n}}{p_{n\!-\!1}C\!-\!p_{n}l_{n}}\!\sum_{i=2}^{n-1}\frac{p_{i-1}-p_{i}}{p_{i}}\!+\!l_{n}.
\end{equation*}

\begin{lemma}
	\label{lem:pn}
When $C\rightarrow \infty$, the competitive ratio of \socs takes the minimum value only if $p_{n-1}\rightarrow p_{n}$.	
\end{lemma}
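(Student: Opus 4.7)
The plan is to substitute $r = p_{n-1}C/(p_n l_n)$ into the displayed equation immediately preceding the lemma and then analyze the resulting constraint in the limit $C \to \infty$. Under this substitution, the factor $\tfrac{p_n C l_n}{p_{n-1}C - p_n l_n}$ simplifies cleanly to $\tfrac{C}{r-1}$, and using $l_n/C = p_{n-1}/(p_n r)$, the governing equation becomes
\[
\frac{l_1}{C} + \frac{p_{n-1}}{p_n r} + \frac{S}{r-1} = 1, \qquad S := \sum_{i=2}^{n-1}\frac{p_{i-1}-p_i}{p_i}.
\]

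Next I would apply the elementary inequality $x-1 \ge \ln x$ term-by-term with $x = p_{i-1}/p_i$ to derive the lower bound $S \ge \ln(p_1/p_{n-1}) = \log(\theta\, p_n/p_{n-1})$, and argue that this bound is asymptotically tight as the step function is refined. The regime $C \to \infty$ is exactly what permits the refinement: each intermediate step length $l_i = C(p_{i-1}-p_i)/((r-1)p_i)$ can shrink to the discretization unit while the number of steps $n$ grows without bound, driving every ratio $p_{i-1}/p_i \to 1$ and making the log-sum inequality asymptotically tight. I would also claim that at the optimum $l_1/C \to 0$, since capacity assigned to the flat $p_{\max}$-region is never exploited for selling and only squeezes the informative decreasing part of $g(z)$.

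Combining these facts, the $C\to\infty$ binding relation reduces to
\[
\frac{p_{n-1}}{p_n r} + \frac{\log(\theta\, p_n/p_{n-1})}{r-1} = 1,
\]
which implicitly defines $r$ as a function of the ratio $q := p_{n-1}/p_n \ge 1$. I would differentiate this relation implicitly with respect to $q$ and, combining the resulting first-order stationarity condition for $r$ with the constraint itself, extract an algebraic equation whose only admissible root consistent with $q \ge 1$ and the minimum-$r$ branch forces $q \to 1$, i.e.\ $p_{n-1} \to p_n$, which is precisely the lemma's conclusion.

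The main obstacle is the implicit differentiation together with the bookkeeping it entails: $q$ enters the constraint linearly through $q/r$, nonlinearly through $\log(\theta/q)$, and indirectly through $r=r(q)$. Collecting these contributions and using the constraint to eliminate redundant $\log$-terms is the heart of the argument. A secondary technicality is rigorously justifying both limits that feed into the asymptotic constraint, namely $l_1/C \to 0$ and the saturation $S \to \log(\theta p_n/p_{n-1})$, each of which rests on the flexibility granted by $C \to \infty$ and $n \to \infty$.
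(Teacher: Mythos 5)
Your reduction of the capacity identity to $\tfrac{l_1}{C}+\tfrac{q}{r}+\tfrac{S}{r-1}=1$ with $q=p_{n-1}/p_n$ and $r=p_{n-1}C/(p_nl_n)$ is correct, as are the auxiliary claims $S\ge\ln(\theta/q)$ (tight under refinement) and $l_1/C\to 0$. The fatal step is the last one: the implicit differentiation does \emph{not} force $q\to 1$. Writing $F(r,q)=\tfrac{q}{r}+\tfrac{\ln(\theta/q)}{r-1}-1$, one has $F_r<0$ everywhere and $F_q\big|_{q=1}=\tfrac{1}{r}-\tfrac{1}{r-1}<0$, hence $\tfrac{dr}{dq}\big|_{q=1}=-F_q/F_r<0$: the admissible (larger) root $r(q)$ of your binding relation is strictly \emph{decreasing} at $q=1$. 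Concretely, for $\theta=e$ the relation gives $r(1)=(3+\sqrt{5})/2\approx 2.618$ but $r(1.5)\approx 2.49$. So the information you retain --- which is exactly Eq.~\eqref{eq:l_i} together with $\sum_i l_i=C$ and the telescoping lower bound on $S$ --- is consistent with a strictly smaller ratio at some $q^*>1$, and no amount of bookkeeping in the implicit differentiation will recover the conclusion $q\to 1$. Whatever excludes a terminal price gap must come from a constraint that is not present in this aggregated identity, so the proposed route cannot close as written.

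The paper's own proof is of an entirely different character and does not pass through this optimization at all: it is a local exchange argument on the step function $g$ itself. Assuming $p_{n-1}\not\to p_n$, it inserts one extra step of unit length at price $(p_{n-1}+p_n)/2$ just below the flat region and appeals to Lemma~\ref{lem:lcr} to argue that every local competitive ratio of the perturbed function is strictly smaller, contradicting optimality of $g$. If you want to salvage an analytic route, you would have to identify and impose whatever additional structural constraint that perturbation encodes (roughly, that a jump from $p_{n-1}$ down to $p_{\min}$ at the threshold can always be profitably subdivided); the single relation you derived is provably insufficient on its own.
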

Recall that $C$ is the discretized version  of the original storage capacity, and $C \rightarrow \infty$ could be achieved if we choose sufficiently small unit of electricity $\eta$ as in Eq.~\eqref{eq:cd}.
 Given the results in Lemma~\ref{lem:nes_con} and Lemma~\ref{lem:pn}, we have
\begin{eqnarray}
\label{eq:cr_ln}
\cratio(\socs)&=&\frac{C}{l_{n}}
=\frac{l_{1}}{l_{n}}\!\!+\!\frac{C}{C\!-\!l_{n}}\!\sum_{i=2}^{n-1}\frac{p_{i-1}\!-\!p_{i}}{p_{i}}+1\nonumber\\
&\geq & \frac{C}{C\!-\!l_{n}}\!\log\theta+1.
\end{eqnarray}
It can be verified that the above equation achieves the minimum value when $n\rightarrow \infty$, $l_{1}=0$, and ${l_{i}=1, i\in \{2,\ldots,n-1\}}$. Now, if $z<C-l_{n}$, we have
\begin{eqnarray*}
C-z&=\sum\limits_{k=z}^{n}l_{k}=\frac{p_{n}Cl_{n}}{p_{n-1}C-p_{n}l_{n}}\!\sum_{k=z}^{n-1}\frac{p_{k-1}-p_{k}}{p_{k}}+l_{n} \\
&\approx\frac{Cl_{n}}{C-l_{n}}\int_{p_{\min}}^{p_{z-1}}\frac{1}{p}dp+l_{n} \\
&=\frac{Cl_{n}}{C-l_{n}}\log\frac{p_{z-1}}{p_{\min}}+l_{n},\nonumber
\end{eqnarray*}
where the second-to-the-last equality holds since the difference between $p_{i}$ and $p_{i-1}$ would be arbitrarily small when $n\rightarrow \infty$ and ${l_{i}=1, i\in \{2,\ldots,n-1\}}$. Note that when ${1<z<C-l_{n}}$ and ${g(z)=p_{z-1}}$, solving the above equation we get the following closed-form for $g(z)$
\begin{equation}
\label{eq:gzln}
\begin{split}
&g(z)=\left\{\begin{array}{ll}
p_{\min} & \text{if } z\geq C-l_n, \\
p_{\min}e^{\frac{(C-l_n-z)(C-l_n)}{Cl_n}}& \text{otherwise}.
\end{array}
\right.
\end{split}
\end{equation}

Moreover, according to Eq.~\eqref{eq:cr_ln}, we have
\begin{equation*}
l_{n}=\frac{C}{\cratio(\socs)}=\frac{C}{\frac{C}{C-l_{n}}\log\theta+1},
\end{equation*}
and the closed form for $l_n$ is
\begin{equation}
\label{eq:ln}
l_n = \frac{(2+\log\theta)C-\sqrt{\log^{2}\theta+4\log\theta}C}{2}.
\end{equation}

Putting together the results in Eqs.~\eqref{eq:gzln},~\eqref{eq:ln}, and~\eqref{eq:cth}, the result in Theorem~\ref{thm:cr} is proved.

\section{Online Offering Strategy without Accurate Single-Slot Prediction}
\label{sec:online_nostep}
In Sec.~\ref{sec:online_nostep1}, we first extend the previous result to the case that the clearing price $p(t)$ is unknown to \rsgenco when submitting the offer, however, the renewable output $u(t)$ is known accurately. Second, in Sec.~\ref{sec:online_nostep2}, we extend the result to the general case that the renewable output is known to \rsgenco with forecasting error.
\subsection{\emph{\ocsmb}: \emph{\socs} with multiple offer submissions; $p(t)$ Is Unknown, $u(t)$ Is Known}
\label{sec:online_nostep1}

Our general approach in this scenario is to use the potentials of submitting \textit{multiple offers} which is allowed in the current markets, \textit{e.g.}, at most 10 offers are permitted for \rsgenco in PJM market~\cite{PJM_supply_curve}.
Our approach in this case is to calculate total feasible commitment volume, and then partition this total amount into multiple offers, each of which conveying a portion of the offering volume, in different prices.
In Algorithm~\ref{alg:2}, we summarize the detail of \ocsmb, as the multiple version of \socs. In \ocsmb, we denote  $\mathcal{B}=\left\{(p_1,x_1),(p_2,x_2), \ldots,(p_m,x_m)\right\}$ as the set of multiple offers submitted by the \rsgenco, where $m$ is the maximum number of offers that is permitted by the market operator. In \ocsmb, if $\min\{u(t),\rho_{c}\}+z(t)>c^{\textsf{th}}$, it means that the current storage level $z(t)$ plus the chargeable renewable supply, \textit{i.e.,} $\min\{u(t),\rho_{c}\}$, exceeds the threshold value $c^{\textsf{th}}$, so, we can safely submit the surplus amount with the minimum price (Line~\ref{algline2:3}), and partition the remaining amount equally in $m-1$ offers (Lines~\ref{algline2:4}-\ref{algline2:8}). Otherwise, we submit the excessive energy $[u(t)-\rho_{c}]^+$ at price $p_{\min}$ and partition the residual equally into $m-1$ offers (Lines~\ref{algline2:10}-\ref{algline2:15}).
\begin{algorithm}[!t]
	\caption{---\ocsmb $\big[u(t),z(t)\big]$}
	\label{alg:2}
		\begin{algorithmic}[1]
			\State declare $\mathcal{B} \leftarrow \emptyset$ as the set of offers
			\If{$\min\{u(t),\rho_{c}\}+z(t)>c^{\textsf{th}}$}\label{algline2:2}
			\State$\mathcal{B}\leftarrow\left\{(\!u(t)\!+\!z(t)\!-\!c^{\textsf{th}},p_{\min}\!)\right\}$\label{algline2:3}
            \State $\Delta x\leftarrow\min\{c^{\textsf{th}},u(t)+\rho_d\}/(m-1)$\label{algline2:4}
			\For{$i \leftarrow1$ to $m-1$}
			\State $p_{i} \leftarrow g(c^{\textsf{th}}-i\Delta x)$
			\State $\mathcal{B}\leftarrow \mathcal{B} \cup \left\{(\Delta x,p_{i})\right\}$
			\EndFor\label{algline2:8}
			\Else
            \State$\mathcal{B}\leftarrow\left\{([u(t)-\rho_{c}]^+,p_{\min}\!)\right\}$\label{algline2:10}
			\State $\Delta x\leftarrow[u(t)+\min \{z(t),\rho_d\}]/(m-1)$
			\For{$i\leftarrow1$ to $m-1$}
			\State $p_{i} \leftarrow g(z(t)+u(t)-i\Delta x)$
			\State $\mathcal{B}\leftarrow \mathcal{B} \cup \left\{(\Delta x,p_{i})\right\}$
			\EndFor\label{algline2:15}
			\EndIf
			\State submit $\mathcal{B}$
		\end{algorithmic}
\end{algorithm}
Theorem~\ref{thm:2} characterizes the competitive ratio of \ocsmb as a function of \cratio(\socs) and the number of submitted offers $m$.

\begin{theorem}
	\label{thm:2}
	The competitive ratio of \emph{\ocsmb} is bounded by
	\begin{equation*}
	\cratio(\emph{\ocsmb})\leq \left(1+\frac{\cratio({\emph{\socs}})\theta}{m^2}\right)\cratio({\emph{\socs}}).
	\end{equation*}
\end{theorem}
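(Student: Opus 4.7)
My plan is to reduce the competitive analysis of \ocsmb to that of \socs by quantifying the multiplicative loss introduced when a single price-dependent commitment is discretized into $m$ pre-submitted offers. Starting from the bound $R_{\ofa}(\boldsymbol{\omega}) \leq \cratio(\socs) R_{\socs}(\boldsymbol{\omega})$ supplied by Theorem~\ref{thm:cr}, the target reduces to establishing the single comparison
\begin{equation*}
R_{\socs}(\boldsymbol{\omega}) \leq \left(1+\frac{\cratio(\socs)\,\theta}{m^{2}}\right) R_{\ocsmb}(\boldsymbol{\omega})
\end{equation*}
for every instance $\boldsymbol{\omega}$. Because both algorithms act against the same threshold curve $g(z)$, I would prove this by pairing the commitment and storage trajectories of the two algorithms slot-by-slot.

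Second, I would quantify the per-slot discretization error. Given $p(t)$, \socs drives the post-commitment storage level exactly to the ideal value $\hat g^{-1}(p(t))$, whereas \ocsmb pre-commits a staircase of $m$ offers with common step width $\Delta x \leq c^{\textsf{th}}/(m-1) = O(C/m)$, so its realized commitment differs from the ideal one by at most one step. Because the optimal $\hat g$ in Eq.~\eqref{eq:gz} is exponential in $z$, equal $z$-steps translate into multiplicative price gaps of ratio $\theta^{1/(m-1)}$, yielding a price spacing $\Delta p = O(p_{\max}\log\theta/m) = O(p_{\max}\cratio(\socs)/m)$. Each per-slot revenue discrepancy between \socs and \ocsmb is therefore bounded by $\Delta x\cdot \Delta p = O(Cp_{\max}\cratio(\socs)/m^{2})$, which is precisely the origin of the $1/m^{2}$ scaling in the claimed bound.

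Third, I would aggregate these per-slot gaps against $R_{\ocsmb}(\boldsymbol{\omega})$. Any non-trivial trajectory of \ocsmb collects at least $\Omega(p_{\min} C)$ of revenue per fill-and-drain cycle of the storage, so amortising the discretization loss against its revenue inflates the relative gap by at most $(p_{\max}/p_{\min})\cdot \cratio(\socs)/m^{2} = \theta\cratio(\socs)/m^{2}$, which matches the target multiplicative factor.

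The main obstacle I anticipate is the storage-coupled dynamics rather than the per-slot arithmetic: because \socs and \ocsmb commit slightly different amounts each slot, their storage trajectories $z^{\socs}(t)$ and $z^{\ocsmb}(t)$ drift, so naive slot-wise accounting could double-count the discrepancy. I plan to control this via the inductive invariant $|z^{\socs}(t)-z^{\ocsmb}(t)|\leq \Delta x$, which should persist because both algorithms act against the same $g$-indexed staircase and cannot diverge by more than one discretization bin. Under this invariant, any energy sold ``at the wrong price'' by \ocsmb is recovered at most one bin later in the same cycle, collapsing the cumulative revenue discrepancy into the clean $O(\theta\cratio(\socs)/m^{2})$ bound needed to finish the argument.
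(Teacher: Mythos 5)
Your high-level strategy matches the paper's: both arguments couple \ocsmb to \socs slot-by-slot, bound the drift between the two storage trajectories by one discretization step, and then amortize the resulting revenue loss against $R_{\socs}$, with $R_{\ofa}\leq\cratio(\socs)R_{\socs}$ supplying the final factor. However, your key quantitative step --- bounding the per-slot revenue discrepancy by $\Delta x\cdot\Delta p$ on the grounds that energy withheld by \ocsmb is ``recovered at most one bin later'' at a price within the staircase spacing $\Delta p$ --- is not justified, and this is where the argument breaks. The committed energy is paid at the market clearing price, not at the offering price, and the adversary controls all future prices: the $\Delta x$ of energy that \ocsmb retains (because its staircase does not reach down to $\hat g^{-1}(p(t))$) may subsequently be sold at $p_{\min}$, or never sold at all if the price stays below the threshold for the rest of the horizon. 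The per-event loss is therefore $\Delta x$ times a genuine price gap of order $p(t)-p_{\min}$, not $\Delta x\cdot\Delta p$; summed naively this yields a relative error of order $\theta\,\cratio(\socs)/m$, one factor of $m$ short of the claim. The paper recovers the second factor of $1/m$ differently: it shows $z^{\omega}_{\ocsmb}(t)\leq\bigl(1+\tfrac{1}{m}\bigr)z^{\omega}_{\socs}(t)$ (a multiplicative drift bound), so the cumulative loss at any time is at most $p(t)\,z^{\omega}_{\socs}(t)/m$ with $p(t)\leq g\bigl((1-\tfrac1m)z^{\omega}_{\socs}(t)\bigr)$ controlled by the threshold function itself, and then divides by $R_{\socs}(\boldsymbol{\omega})\geq\int_{z^{\omega}_{\socs}(t)}^{C}g(z)\,dz$ using the convexity of $g$. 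It is the evaluation of that ratio at the worst level $z=c^{\mathsf{th}}$, not a $\Delta p$ bound on the resale price, that produces the $\theta\,\cratio(\socs)/m^{2}$ term.

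A secondary issue is your invariant $|z^{\socs}(t)-z^{\ocsmb}(t)|\leq\Delta x$: Algorithm~\ref{alg:2} re-anchors the staircase at the \emph{current} (already drifted) storage level each slot and recomputes $\Delta x$ from $z(t)$ and $u(t)$, so an additive one-bin invariant does not propagate automatically; you would need to prove the multiplicative form the paper uses (or an equivalent) by induction on the coupled dynamics. Fixing these two points would essentially reproduce the paper's argument.
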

Note that $\cratio(\ocsmb) \rightarrow \cratio(\socs)$ as $m \rightarrow \infty$. In experiments, we evaluate the impact of the number of offers on the performance of \ocsmb.

\subsection{\emph{\mocsmb}: generalized \emph{\ocsmb}; $u(t)$ Is Given with Forecasting Error, $p(t)$ Is Unknown}
\label{sec:online_nostep2}
Finally, we release the accurate forecasting assumption of the renewable output, and assume that the error of power generation in the forthcoming slot is bounded in a particular region. Namely the input to this algorithm is $\tilde{u}(t)$ as the predicted value and $e(t)$ as the maximum error, such that
\begin{equation*}
u(t)\in [(1-e(t))\tilde{u}(t),(1+e(t))\tilde{u}(t)],~0\leq e(t)\leq e_{\max}.
\end{equation*}
Our algorithm \mocsmb is a simple extension of \ocsmb, with replacing $u(t)$ with  $(1-e(t))\tilde{u}(t)$ as the minimum possible value for the renewable output. We skip the other details since they are the same as \ocsmb. Note that with this input, the algorithm behaves in the most conservative way, such that the over-commitment never happens. Extending the algorithm to the more aggressive cases that takes into account the risk of over-commitment is part of our future work.
In following theorem we characterizes the competitive ratio of \mocsmb.
\begin{theorem}
\label{thm:3}
	Assuming $e_{\max}<0.5$, the competitive ratio of \emph{\mocsmb} is bounded by
$
	\cratio(\emph{\mocsmb})\leq \frac{1}{1-2e_{\max}}\cratio(\emph{\ocsmb}).
$
\end{theorem}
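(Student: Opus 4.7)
My plan is to sandwich $R_{\mocsmb}$ against $R_{\ofa}$ by routing through the conservative surrogate instance $u'(t) \triangleq (1-e(t))\tilde u(t)$ that \mocsmb feeds into its internal \ocsmb subroutine. Since $0 \le e(t) \le e_{\max} < 1/2$, a short algebraic check gives
\begin{equation*}
u(t) \le (1+e(t))\tilde u(t) \le \frac{1+e_{\max}}{1-e_{\max}}\, u'(t) \le \frac{1}{1-2e_{\max}}\, u'(t),
\end{equation*}
equivalently $u'(t) \ge (1-2e_{\max})u(t)$. All three steps of the proof pivot on this ratio.

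The first step, which is the main obstacle, is to show $R_{\ofa}(u) \le \frac{1}{1-2e_{\max}}\, R_{\ofa}(u')$, where $R_{\ofa}(\cdot)$ denotes the offline optimal profit for the indicated renewable trajectory (holding the clearing-price sequence fixed). Given an offline optimum $(x^*(t),z^*(t))$ for input $u$, I would build a candidate schedule for input $u'$ by scaling: $\tilde x(t) \triangleq (1-2e_{\max})x^*(t)$, with induced storage $\tilde z(t)$. Its objective value is exactly $(1-2e_{\max})R_{\ofa}(u)$, so it suffices to check feasibility $\tilde x(t) \le u'(t) + \min\{\tilde z(t),\rho_d\}$. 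This reduces to the inductive invariant $\tilde z(t) \ge (1-2e_{\max})z^*(t)$, established by case analysis on the sign of $u(t)-x^*(t)$. The key technical subtlety is the interaction between the scaling and the non-scaled caps $\rho_c,\rho_d,C$; in particular, in the discharging case one must rule out the configuration where the discharge rate cap binds asymmetrically on the two trajectories, and the a priori OPT-feasibility bound $x^*(t)-u(t) \le \min\{z^*(t),\rho_d\} \le \rho_d$ is what prevents this.

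The second step invokes Theorem~\ref{thm:2} on the surrogate instance $(u',p)$ to obtain $R_{\ofa}(u') \le \cratio(\ocsmb)\cdot R_{\ocsmb}(u')$. The third step shows $R_{\ocsmb}(u') \le R_{\mocsmb}(u,\tilde u)$: both executions run the identical \ocsmb routine on the identical conservative input $u'$, but in \mocsmb the storage absorbs the true (larger) flux $u(t) \ge u'(t)$. Coupling the two processes from the same initial storage, I would prove $z^{\mocsmb}(t) \ge z^{\ocsmb}(t)$ by induction, using a Lipschitz-in-total-energy property of Algorithm~\ref{alg:2} (the committed volume cannot grow by more than the storage increment) together with the monotonicity of $g(\cdot)$, which guarantees weakly larger commitments from weakly higher storage; since $u(t) \ge u'(t)$ also rules out any over-commitment penalty, the inequality follows.

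Chaining the three bounds gives
\begin{equation*}
R_{\ofa}(u) \le \tfrac{1}{1-2e_{\max}}\, R_{\ofa}(u') \le \tfrac{\cratio(\ocsmb)}{1-2e_{\max}}\, R_{\ocsmb}(u') \le \tfrac{\cratio(\ocsmb)}{1-2e_{\max}}\, R_{\mocsmb}(u,\tilde u),
\end{equation*}
which is exactly the bound claimed for $\cratio(\mocsmb)$.
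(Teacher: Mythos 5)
Your proposal is correct, but it takes a genuinely different route from the paper's. The paper's entire argument is a single per-slot domination claim: it asserts that the volume committed by \mocsmb at every slot is at least $(1-2e_{\max})$ times the volume committed by \ocsmb run with the true renewable output, so $R_{\mocsmb}\ge(1-2e_{\max})R_{\ocsmb}$ and the bound follows by dividing the common benchmark $R_{\ofa}(u)$ by both sides. You instead route through the offline optimum of the conservative surrogate instance $u'(t)=(1-e(t))\tilde u(t)$: you scale \ofa's schedule to show $R_{\ofa}(u)\le\frac{1}{1-2e_{\max}}R_{\ofa}(u')$, invoke Theorem~\ref{thm:2} on the instance $(u',p)$, and then couple \mocsmb against \ocsmb-on-$u'$. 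The two decompositions place the $(1-2e_{\max})$ loss in different links of the chain (the paper between the two online algorithms, you between the two offline optima), but they rest on the same underlying facts --- the feasibility bound $x^*(t)-u(t)\le\rho_d$ that neutralizes the non-scaled caps $\rho_c,\rho_d,C$, and a monotone storage coupling between the two online runs. What your version buys is that each of the three links is an explicitly checkable lemma; the paper's one-sentence proof (``it's obvious that\ldots'') leaves the storage-coupling induction --- which is precisely where the hard caps could in principle break a naive scaling argument --- entirely implicit, so your write-up is, if anything, more complete than the original. One caveat worth recording before you call it done: your third link (the coupling of \mocsmb with \ocsmb-on-$u'$ via $z^{\mocsmb}(t)\ge z^{\ocsmb}(t)$ and the Lipschitz-in-storage property of Algorithm~\ref{alg:2}) is the one step that is asserted rather than proved, and it carries the same burden as the paper's own unproven claim; it does hold because both trajectories drain the storage toward the same price-determined target level $g^{-1}(p(t))$, but it deserves the explicit induction you sketch. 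The constants agree in the end because $\frac{1-e}{1+e}\ge 1-2e$ for $0\le e<\tfrac12$ in both treatments.
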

The proof for Theorem \ref{thm:3} is very simple. It's obvious that the amount of committed energy by \mocsmb is always larger than $(1-2e_{\max})x_{\ocsmb}$ for any $t\in \mathcal{T}$.

Clearly, as $e_{\max} \rightarrow 0$, we get $	\cratio(\mocsmb) \rightarrow \cratio(\ocsmb)$. In experiments, we investigate the impact of forecasting error on the result of \mocsmb.

\section{Experimental Results}
\label{sec:exp}
In this section, we evaluate the performance of our online strategies using the real-word traces for the renewable output and electricity market prices. Our objective is two-fold:
(i) to compare the performance against the optimal offline, a comparison algorithm~\cite{lorenz2009optimal}, and a baseline in which there is no storage, and
(ii) to investigate the impact of the system model and algorithm parameters.
\subsection{Experimental Settings}
\subsubsection{Renewable Output and Electricity Market Prices} We use the wind generation data from PJM energy market~\cite{PJM} with the capacity of $10$MW. We note that this is a wind farm in moderate size and the world largest has an operational capacity of $1020$MW~\cite{WindCapacity}.
The hourly electricity price data are from PJM market for most of the experiments. We also demonstrated our results for Nord Pool~\cite{nord} and NYISO~\cite{NYISO} markets in Table~\ref{tbl:comp}. We note that the prices exhibit severe seasonal patterns. In particular, the prices are highly volatile, \textit{e.g.}, in summer the peak price reaches as high as $\$396.9/$MWh, in $2015$. For this reason we evaluate the performance of our algorithms in different seasons as well.

\begin{figure}[!t]
	\minipage{0.45\textwidth}
    \center
	\includegraphics[width=0.9\textwidth]{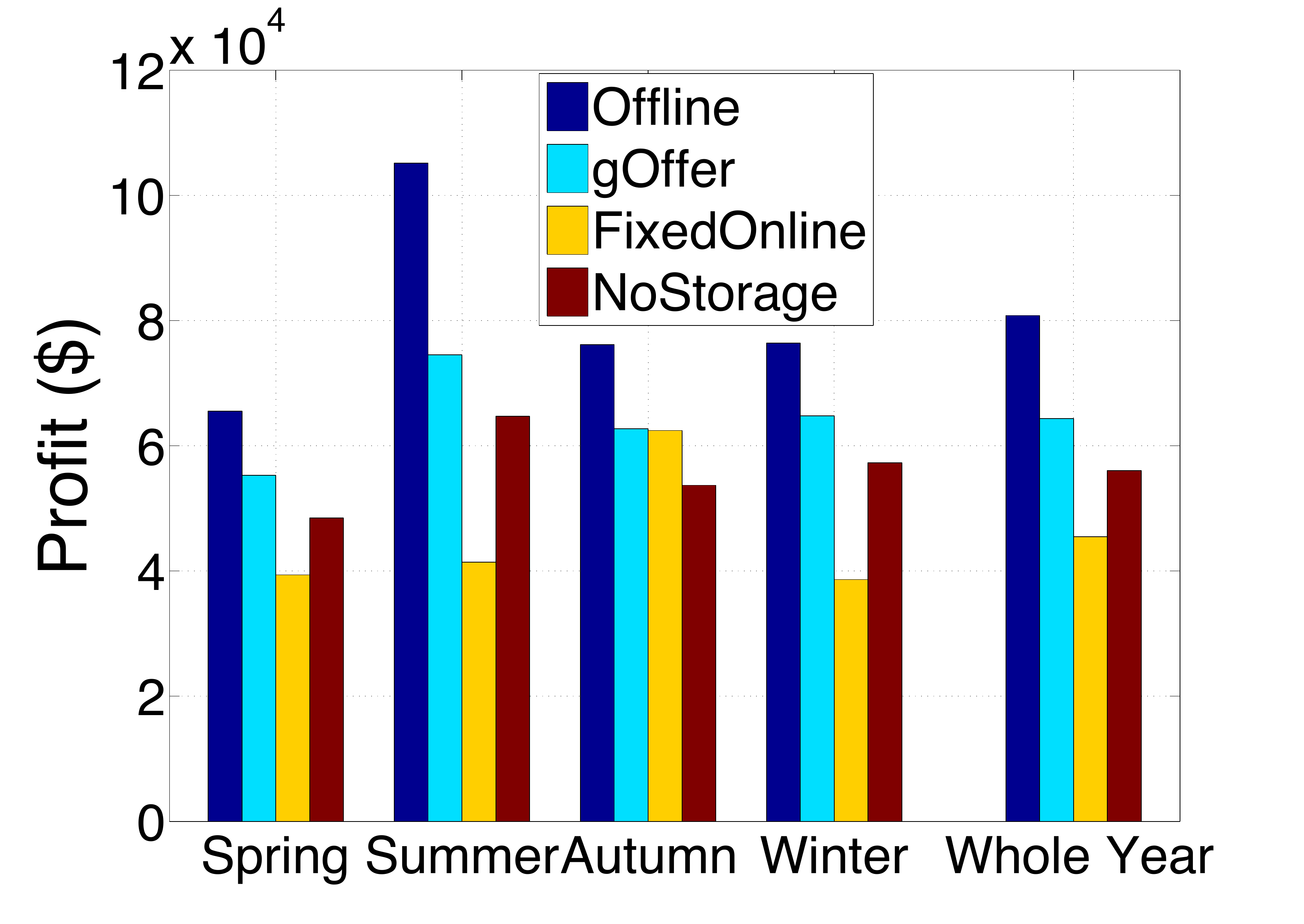}
	\vspace{-6mm}
	\caption{The performance of different algorithms in different seasons}
	\label{fig:SeasonalChange}
	\endminipage\hfill		
	\minipage{0.45\textwidth}
    \center
	\includegraphics[width=0.9\textwidth]{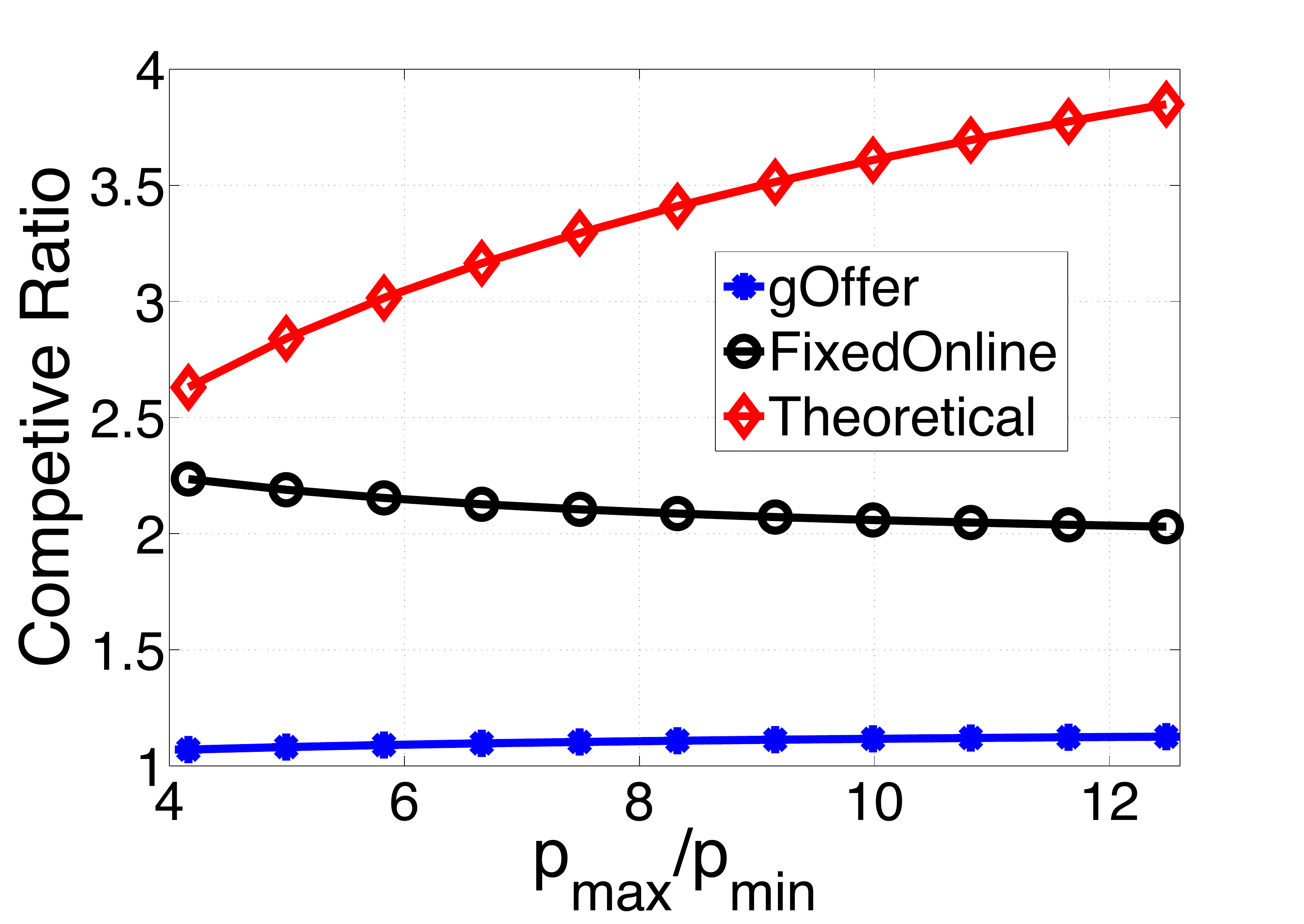}
	\vspace{-6mm}	\caption{The competitive ratio as a function of price volatility}
	\label{fig:PriceRange}
	\endminipage\hfill
\end{figure}

\subsubsection{Storage Capacity} Unless otherwise specified, the storage capacity is set to $20$MWh. The maximum charge and discharge rates are $10$MW. In reality,  large scale Compressed Air Energy Storage (CAES) with similar parameters has been developed to cope with renewable uncertainty~\cite{luo2015overview}.

\subsubsection{Parameters for the Algorithms} Unless otherwise specified, for \ocsmb and \mocsmb, the default value for the number of offers is $10$. This is the common practice in PJM market~\cite{PJM_supply_curve}. Moreover, the maximum forecasting error $e_{\max}$ in \mocsmb is set to $10\%$. Finally, we note that each data point in figures is the average results of $100$ different runs of the algorithms with $T=360$ hours.

\subsubsection{Comparison Algorithms} We compare the performance of our algorithm \mocsmb with three other alternatives: (1) \ofa, the optimal offline solution that is implemented as the benchmark to obtain the empirical competitive ratio; (2) \nostorage, the optimal offline cost when there is no storage. This is used to evaluate the economic advantage of integrating the storage; (3) \fonline, another simple online algorithm with a fixed threshold price. Specifically, we follow the approach in~\cite{lorenz2009optimal} and set the threshold of this simple online algorithm fixed at $\sqrt{p_{\min}p_{\max}}$, regardless of the storage level. In this algorithm, \rsgenco commits all the electricity whenever the price is not smaller than this threshold. The acronyms for all the algorithms are summarized in Table~\ref{tbl:comp_alg}.

\begin{table}
	\caption{Summary of comparison algorithms}\vspace{-5mm}
	\label{tbl:comp_alg}
	\begin{center}
		\begin{tabular}{|c L{6.8cm}|}
			\hline
			\textbf{Algorithm} & \textbf{Description} \\
			\hline
			\hline
			& \textbf{Our algorithms}\\
			\hline
			\socs & Simplified online offering strategy; $p(t)$ and $u(t)$ are known\\
			\ocsmb & \socs with multiple submissions; $p(t)$ is unknown, $u(t)$ is known\\
			\mocsmb & Generalized \ocsmb; $p(t)$ is unknown, $u(t)$ is known with error\\
			\hline
			\hline
			& \textbf{Comparison Algorithms}\\
			\hline
			\ofa & Optimal offline solution with storage\\
			\nostorage & Optimal offline solution without storage\\
			\fonline & Simple online algorithm with fixed threshold price~\cite{lorenz2009optimal}\\
			\hline
		\end{tabular}
	\end{center}
\end{table}

\subsection{Experimental Results}
\subsubsection{Comparison Results across Different Seasons}
In this experiment we report the profit obtained by different algorithms in different seasons as well as the whole year. The result is depicted in Fig.~\ref{fig:SeasonalChange}. The main observations are: (1) \mocsmb achieves $80\%$ of the offline optimum, which shows that it is close to optimal. (2) \mocsmb outperforms \nostorage by $15\%$, which signifies the substantial economic benefit of incorporating the storage. (3) \mocsmb outperforms \fonline by $42\%$, which depicts the superiority of our online algorithms as compared to other ``storage-level-oblivious'' online alternatives.

\begin{table}
	\caption{Summary of Theoretical and Empirical Competitive Ratios on Different Electricity Markets}
	\label{tbl:comp}
	\centering
	\begin{tabular}{|c|c|c|c|}
		\hline
		Electricity market & $\theta = p_{\max}/p_{\min}$  & Theoretical \cratio &  Empirical \cratio\\
		\hline \hline
		PJM & 13.44 & 4.37 & 1.18 \\
		\hline
		NYISO & 5.32 & 3.38 & 1.14 \\
		\hline
		Nord Pool & 3.63 & 2.95 & 1.09 \\
		\hline
	\end{tabular}
\end{table}

\subsubsection{Impact of the Price Volatility}
The electricity price in the deregulated electricity market exhibits large fluctuation. Theoretically, large price volatility will degrade the performance of the online algorithm, as the competitive ratio is an increasing function of $\theta= p_{\max}/p_{\min}$. In this experiment, we present the result under different values of $\theta$.
As shown in Fig.~\ref{fig:PriceRange}, \mocsmb is robust to price fluctuation with less than $5\%$ increment, even though the theoretical competitive ratio increases by $44\%$. Meanwhile, we note that the empirical competitive ratio of \fonline decreases slightly as the $\theta$ increases. However, it is on average $90\%$ larger than that of \mocsmb, which further signifies the superiority of \mocsmb.
In addition, we report the result of \mocsmb for the prices in different markets in Table~\ref{tbl:comp}. The result signifies that the larger the price volatility, the large theoretical and empirical competitive ratios.

\begin{figure}
	\minipage{0.45\textwidth}
    \center
	\includegraphics[width=0.9\textwidth]{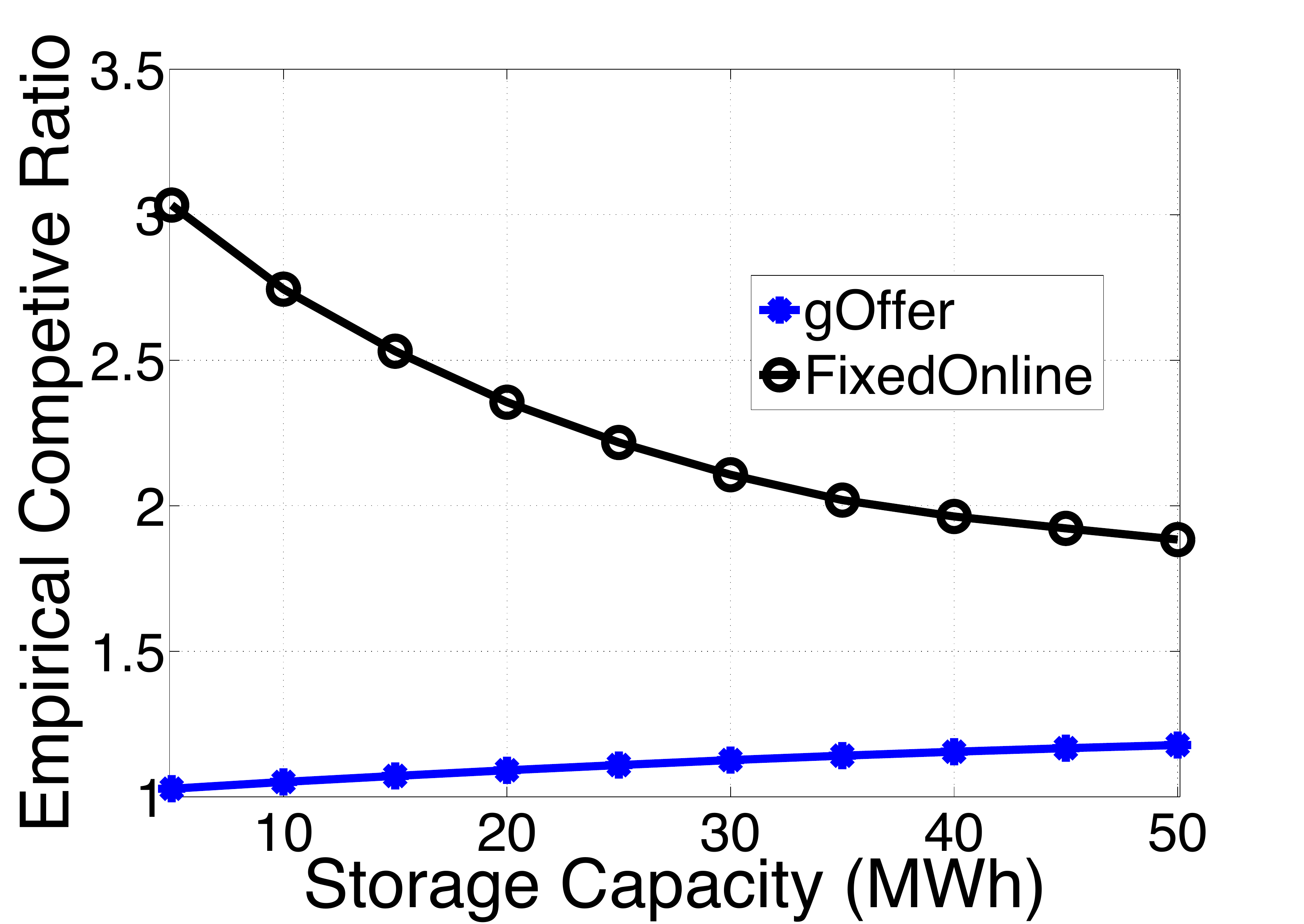}
	\vspace{-6mm}	\caption{The competitive ratio as a function of storage capacity}
	\label{fig:BatteryCapacity}
	\endminipage\hfill
	\minipage{0.45\textwidth}
    \center
	\includegraphics[width=0.9\textwidth]{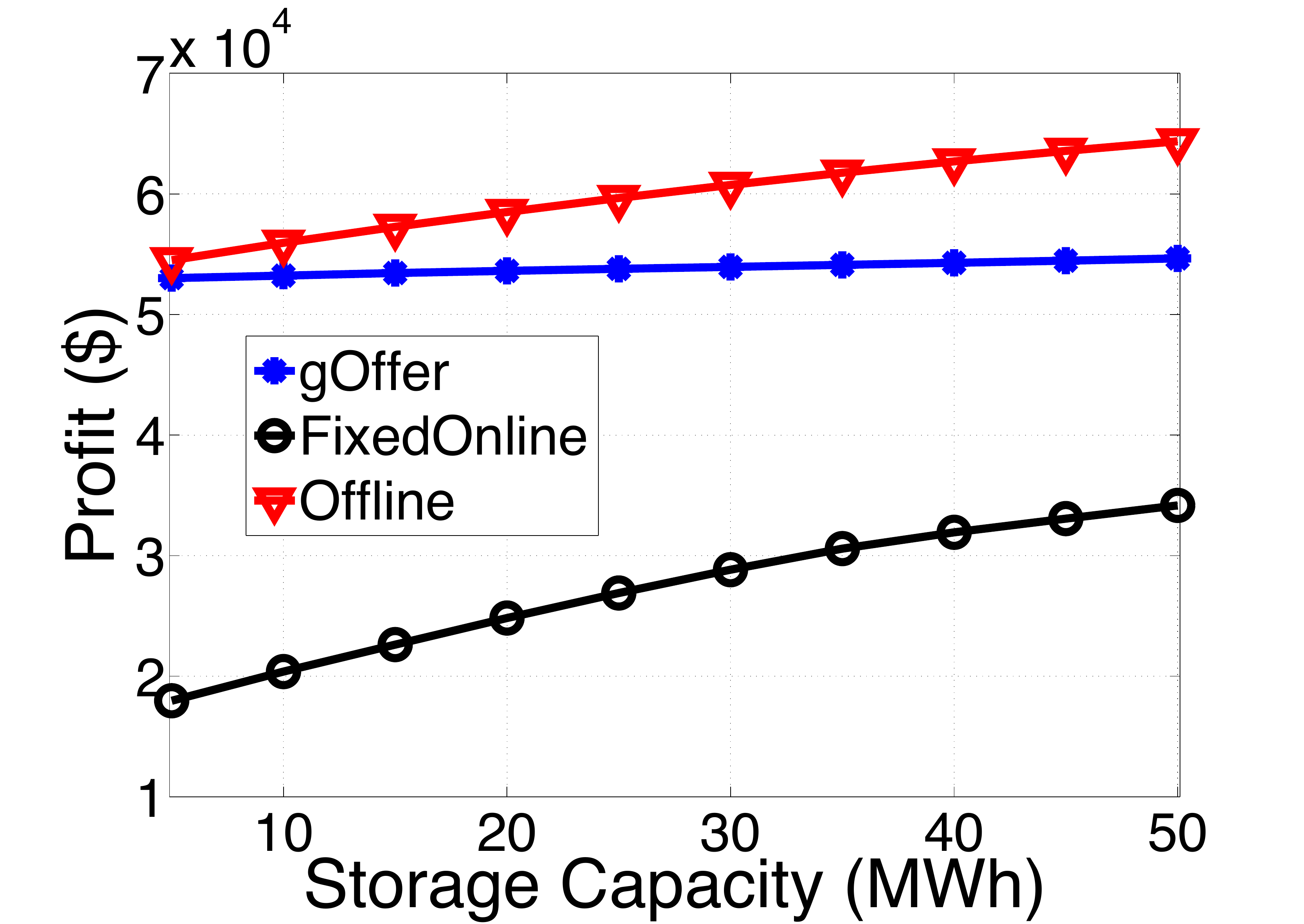}
	\vspace{-6mm}	\caption{The performance of different algorithms as a function of capacity}
	\label{fig:BatteryCapacity_profit}
	\endminipage\hfill
\end{figure}

\subsubsection{Impact of the Storage Capacity}
Storage capacity planning is an important issue that \rsgenco's owner needs to consider, since the storage is still expensive with the current technology. In this experiment, we vary the storage capacity from $5$ to $50$MWh to investigate its impact on the profit of \rsgenco.
Fig.~\ref{fig:BatteryCapacity} and Fig.~\ref{fig:BatteryCapacity_profit} show the empirical competitive ratios and the obtained profits, respectively. As the storage capacity increases, an increase in profit of both online algorithms is observed. However, the increase in \mocsmb is smaller ($3\%$) than that of \fonline ($90\%$). This is mainly because \fonline is completely oblivious to the storage level, and with the increase in capacity, there would be more room to mitigate this unawareness.  Meanwhile, the empirical competitive ratio of \mocsmb increases with large storage capacity (from $1.03$ to $1.18$).
This result depicts that when the capacity is in the order of the renewable capacity (say, $\times 0.5$ to $\times 2$), \mocsmb is close-to-optimal. However, when the storage capacity is much higher than the renewable capacity (say, $\times 5$), perhaps more sophisticated algorithms are required.

\subsubsection{Impact of the Uncertainty of the Clearing Price and the Renewable Output}
In the last set of experiments we investigate the impact of the number of offers in \ocsmb (in Fig.~\ref{fig_NumOfBids}) and the forecasting error in \mocsmb (in Fig.~\ref{fig:PreError}).
In \ocsmb, we relaxed the assumption of \socs and extend it to the case that the clearing price $p(t)$ is unknown. We proposed to submit multiple offers to alleviate its negative impact. To investigate how many offers are sufficient for \ocsmb to achieve the same performance level as \socs, in Fig.~\ref{fig_NumOfBids}, we vary the number of offers from $1$ to $15$. The notable observation is that submitting $1$ or $2$ offers is not sufficient. However, with $3$ or more offers the performance is quite similar to \socs in which the price is known in advance.
In the last experiment, we increase the maximum error of renewable output $e_{\max}$ and calculate the profit of \mocsmb. The result shows that \mocsmb is robust to forecasting error that belows $20\%$, and the obtained profit decreases rapidly as error increases beyond $20\%$.
Concluding above, these experiments demonstrate that the negative impact of the uncertainty in the clearing price can be effectively mitigated by multiple offer submissions. However, accurate short-term renewable forecasting is vital for \rsgenco to obtain a desired profit, since the errors higher than $20\%$ can severely degrade the performance.

\begin{figure}[!t]
	\minipage{0.45\textwidth}
    \center
	\includegraphics[width=0.9\textwidth]{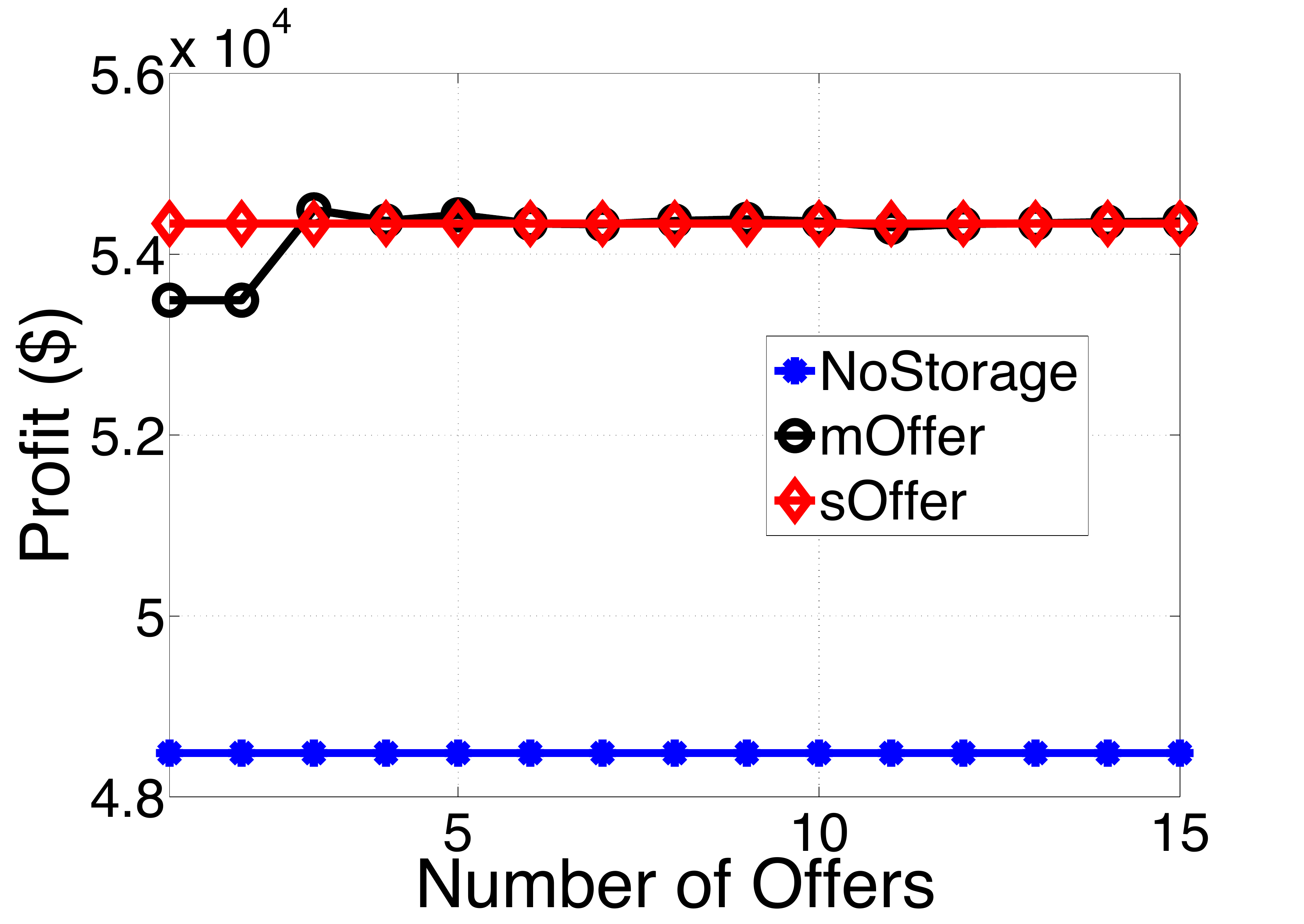}
	\vspace{-6mm}\caption{The performance of \ocsmb as the number of bids increases}
	\label{fig_NumOfBids}
	\endminipage\hfill		
	\minipage{0.45\textwidth}
    \center
	\includegraphics[width=0.9\textwidth]{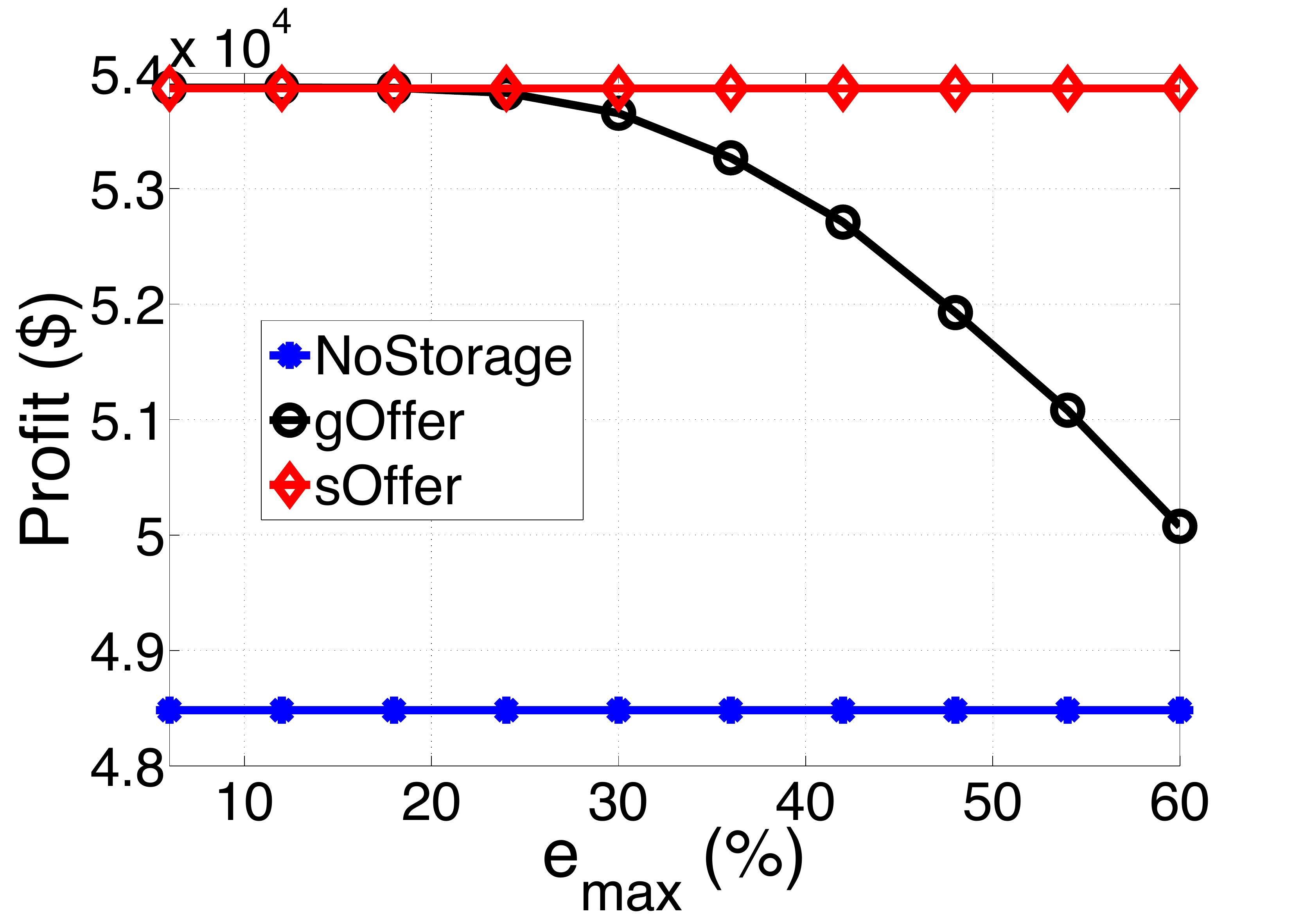}
	\vspace{-6mm}\caption{The performance of \mocsmb as forecasting error increases}
	\label{fig:PreError}
	\endminipage\hfill
\end{figure}

\section{Conclusion}
\label{sec:con}
This paper proposes competitive offering strategies for maximizing the profit of a renewable power producer with storage in hour-ahead market. The underlying problem is coupled over time and the input to the problem is revealed in an online fashion, both of which make the problem challenging. We first tackle a basic setting and propose algorithm for it. And in two successive steps we relax the basic assumptions and generalize our solutions and analysis to the realistic scenario.
Our analytical results characterize competitive ratios for all the proposed algorithms.
Finally, trace-driven evaluations demonstrate the close-to-optimal performance of our algorithms.


\bibliographystyle{abbrv}


\appendix
\label{sec:appendix}

\subsection{proof of Lemma \ref{lem:lcr}}
\begin{lemma}
\label{lem:mp}
Under some worst case $\omega^*$, assume ${\min_{t\in \mathcal{T}}z_{\omega^*}^{\socs}(t)=b}$ then we have ${p(t)\le g(b)+\sigma}$ for any ${\sigma > 0}$.
\end{lemma}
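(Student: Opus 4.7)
The plan is a proof by contradiction that leverages the monotonicity of $g$ and the update rule of \socs. Suppose $\omega^* \in \Omega_{b}^{\socs}$ achieves $\cratio_{b}(\socs)$ and, for contradiction, that some slot $t^*$ satisfies $p(t^*) > g(b)+\sigma$ for a fixed $\sigma>0$.

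First I would show that \socs must commit at $t^*$. Since $b=\min_{t} z^{\socs}_{\omega^*}(t)$, the storage satisfies $z^{\socs}(t^*)\ge b$, hence $z^{+}(t^*)=\min\{z^{\socs}(t^*)+u(t^*),C\}\ge b$. By the monotone decrease of $g$, the candidate offering price satisfies $\hat{p}(t^*)=g(z^{+}(t^*))\le g(b)<p(t^*)$, so \socs triggers the third branch of Eq.~\eqref{eq:hz} and commits $\hat{x}(t^*)=z^{\socs}(t^*)+u(t^*)-\min\{\hat{g}^{-1}(p(t^*)),\,z^{\socs}(t^*)+\rho_c\}$.

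Next I would track the storage update through Eqs.~\eqref{eq:xc}--\eqref{eq:et}. When the discharge bound $\rho_d$ is not binding, the updated storage level evaluates to $\min\{\hat{g}^{-1}(p(t^*)),\,z^{\socs}(t^*)+\rho_c\}$. Because $\hat{g}$ is strictly decreasing and $p(t^*)>g(b)$, we have $\hat{g}^{-1}(p(t^*))<b$, so $z^{\socs}(t^*+1)<b$, contradicting $\min_{t} z^{\socs}(t)=b$. This immediately settles the unconstrained regime.

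The main obstacle is the rate-limited regime, where $\rho_d$ prevents \socs from fully discharging within a single slot and the contradiction above stalls. Here my plan is a transformation argument: given such an $\omega^*$, construct an alternative instance $\omega'$ by replacing the offending slot $t^*$ with a short block of consecutive slots each at price $g(b)+\sigma$, whose cumulative per-slot discharges (limited by $\rho_d$) add up to the unconstrained target $\hat{g}^{-1}(p(t^*))$. I would then verify that $\omega'\in\Omega_{b}^{\socs}$ and that $R_{\ofa}(\omega')/R_{\socs}(\omega')\ge R_{\ofa}(\omega^*)/R_{\socs}(\omega^*)$, essentially because converting a single high-price commitment into several threshold-price commitments cannot raise $R_{\socs}$ relative to $R_{\ofa}$. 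Iterating over all offending slots reduces the rate-limited case to the unconstrained one already handled, and letting $\sigma\downarrow 0$ yields the claim for every $\sigma>0$.
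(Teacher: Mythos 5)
Your opening observation is correct and is actually sharper than what the paper records: when neither rate constraint binds, a slot with $p(t^*)>g(b)$ drives $z^{\socs}_{\omega^*}(t^*+1)=\hat{g}^{-1}(p(t^*))<b$, contradicting $\min_{t}z^{\socs}_{\omega^*}(t)=b$ for \emph{any} instance in $\Omega_b^{\socs}$, with no appeal to worst-case-ness. The paper does not split into regimes; it proves the whole lemma by a single exchange argument, replacing the offending slot with a block of $N=\lceil x^{\socs}_{\omega^*}(t)/\sigma\rceil$ slots with zero renewable input whose prices track $g(\cdot)+\sigma$ along the algorithm's own descending storage trajectory, from $g(z^{\socs}_{\omega^*}(t)+u(t))+\sigma$ down to $g(z^{\socs}_{\omega^*}(t)+u(t)-x^{\socs}_{\omega^*}(t))$. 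All of these prices are at most $g(b)+\sigma\le p(t)$, so \socs commits the same total volume at strictly lower prices, $R_{\socs}$ drops, $R_{\ofa}$ is unchanged (under the side condition $x^{\ofa}_{\omega^*}(t)\le x^{\socs}_{\omega^*}(t)$), and worst-case-ness is contradicted.

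The gap is in your rate-limited case, which is exactly where the lemma needs the worst-case hypothesis. Your replacement block is internally inconsistent: if every new slot carries the constant price $g(b)+\sigma$, then \socs targets the level $\hat{g}^{-1}(g(b)+\sigma)<b$ at each of them, so as soon as $\rho_d$ permits, the storage falls strictly below $b$ and $\omega'\notin\Omega_b^{\socs}$; worse, your requirement that the cumulative discharges reach the "unconstrained target" $\hat{g}^{-1}(p(t^*))$, which is itself below $b$, guarantees the new minimum is below $b$. To keep the modified trajectory at or above $b$ while extracting the same total commitment, the block's prices must descend with the evolving storage level, as in the paper's construction. Separately, the assertion that the swap "cannot raise $R_{\socs}$ relative to $R_{\ofa}$" is the crux and is not argued: lowering the price at $t^*$ can also lower $R_{\ofa}$ if the offline optimum sells heavily there, so you need either the paper's conditioning on $x^{\ofa}_{\omega^*}(t)\le x^{\socs}_{\omega^*}(t)$ or an argument that the offline revenue is preserved under the replacement. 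As written, the second half of the proof does not go through.
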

\begin{proof}
Assume there's a time slot that $p(t)\ge g(b)+\sigma$ under some worst case $\omega^*$. If $x_{\omega^*}^{\ofa}(t)\le x_{\omega^*}^{\socs}(t)$, we just construct another worst case which replaces the $t-\text{th}$ input $(p(t),~u(t))$ with the following input sequence:
\begin{equation*}
\begin{split}
&p'(t)=g(z_{\omega^*}^{\socs}(t)+u(t))+\sigma,~u'(t)=u(t), \\
&p'(t+1)=g(z_{\omega^*}^{\socs}(t+1))+\sigma,~u'(t+1)=0, \\
&p'(t+2)=g(z_{\omega^*}^{\socs}(t+2))+\sigma,~u'(t+2)=0, \\
&\vdots \\
&p'(t+N)=g(z_{\omega^*}^{\socs}(t)+u(t)-x_{\omega^*}^{\socs}(t)),~u'(t+N)=0
\end{split}
\end{equation*}
where $N=\lceil x_{\omega^*}^{\socs}(t)/\sigma\rceil$. In this way, the profit earned by \socs decreases while that of \ofa keep unchanged. This contradicts the assumption at the beginning. Then we complete the proof.
\end{proof}

We adopt a worst-case analysis approach to prove Lemma \ref{lem:lcr}. We assume the state of storage under online algorithm reaches the minimum value $b_{i}$ at time $T_{1}$ and keeps unchanged until time $T_{2}$.

If there's no injection into the storage during $[1,T_{2}]$, according to Lemma \ref{lem:mp}, the maximum market clearing price $[1,T_{1}]$ is $g(b_{i})$ and
the maximum profit earned by \ofa during $[1,T_{1}]$ is $g(b_{i})C$. The minimum profit earned by the online algorithm is $\sum_{k=b_{i}+1}^{b_{n}}g(k)$, so it's immediate that the maximum profit ratio is $g(b_{i})C/\sum_{k=b_{i}+1}^{b_{n}}g(k)$.

(a) If there's energy stored into the battery during $[1,T_{2}]$, we can assume the last time interval with energy stored is $T_{I}=[T_{3},T_{4}]\subset [1,T_{2}]$. During $T_{I}$, the state of storage under \socs will keep increasing or unchanged. Then the off-line algorithm will commit $X=\sum_{t\in T_{I}}u(t)$ amount of energy before $T_{3}$ with lower price than $g(b_{i})$. Moreover, we assume at time $T_{5}$, the state of storage under \socs reaches $z^{\omega^*}_{\socs}(T_{3})$ for the last time during $[1,T_{2}]$. Then the committed energy during $[T_{3},T_{5}]$ by \socs is equal to that by \ofa.

If
\begin{equation*}
\frac{g(z^{\omega^{*}}_{\socs}(T_{3})-1)X}{\sum\limits_{k=z^{\omega^{*}}_{\socs}(T_{3})+1
}^{z^{\omega^{*}}_{\socs}(T_{3})+X}g(k)}>\cratio_{b_{i}}(\socs)
\end{equation*}
we can easily get that:
\begin{equation*}
\begin{split}
\cratio_{\Omega^{\socs}_{z^{\omega^{*}}_{\socs}(T_{3})}}(\socs) & \geq  \frac{(g(z^{\omega^{*}}_{\socs}(T_{3}))+1)z^{\omega^{*}}_{\socs}(T_{3})}{\sum\limits_{k=b_{n}}^{x^{\omega^{*}}_{\socs}(T_{3})}g(k)} \\
&\geq\frac{(g(z^{\omega^{*}}_{\socs}(T_{3}))+1)X}{\sum\limits_{k=z^{\omega^{*}}_{\socs}(T_{3})+1
}^{z^{\omega^{*}}_{\socs}(T_{3})+X}g(k)} \\
&>\cratio_{\Omega^{\socs}_{b_{i}}}(\socs)
\end{split}
\end{equation*}

Thus, for the worst case, if the maximum profit ratio within the case set $\Omega^{\socs}_{b_i}$ satisfies $\cratio_{\Omega^{\socs}_{b_{i}}}(\socs)\geq \cratio_{\Omega^{\socs}_{b_{k}}}(\socs)$ for $k=i+1,i+2,\ldots,n-1$, there's no stored energy during $[1,T_{2}]$ under the worst case.

(b) If there'is energy committed by \socs during $[T_{3}+1,T]$ under the worst case, we can assume the period with energy committed during $[T_{3}+1,T]$ by \socs is $T_{D}=[T_{6},T_{7}]$. And assume the time that the state of battery first reaches $b_{0}=z_{\socs}^{*}(T_{7})$ at time $T_{8}$. It's easy to verify that, at time $T_{7}$ and $T_{8}$, the state of storage under \ofa is $0$. So the committed energy by \ofa algorithm during $[T_{3}+1,T_{7}]$, denoted by $X'$ is equal to that by \socs. During $T_{D}$, the profit earned by online algorithm is at least $\sum_{k=z^{\omega^{*}}_{\socs}(T_{6})-X'}^{z^{\omega^{*}}_{\socs}(T_{6})}g(k)$, and the maximum profit earned by the \ofa is $(g(z^{\omega^{*}}_{\socs}(T_{6}))+X')X'$.

If
\begin{equation*}
\frac{(g(z^{\omega^{*}}_{\socs}(T_{7}))-1)X'}{\sum\limits_{k=z^{\omega^{*}}_{\socs}(T_{6})-X'}^{z^{\omega^{*}}_{\socs}(T_{6})}g(k)}>\cratio_{\Omega^{\socs}_{b_{i}}}(\socs)
\end{equation*}
we can easily get that:
\begin{equation*}
\begin{split}
\cratio_{\Omega^{\socs}_{z^{\omega^{*}}_{\socs}(T_{6})}}(\socs)\geq&\frac{(g(z^{\omega^{*}}_{\socs}(T_{6}))-1)z^{\omega^{*}}_{\socs}(T_{6})}{\sum\limits_{k=b_{n}}^{z^{\omega^{*}}_{\socs}(T_{6})}g(k)} \\
\geq &\frac{(g(z^{\omega^{*}}_{\socs}(T_{6}))-1)X'}{\sum\limits_{k=z^{\omega^{*}}_{\socs}(T_{6})-X'}^{z^{\omega^{*}}_{\socs}(T_{6})}g(k)} \\
>&\cratio_{\Omega^{\socs}_{b_{i}}}(\socs)
\end{split}
\end{equation*}

Thus, for the worst case, if the maximum profit ratio within the case set $\Omega^{\socs}_{b_{i}}$ satisfies $\cratio_{\Omega^{\socs}_{b_{i}}}(\socs)\geq CR_{\Omega^{\socs}_{b_{k}}}(\socs)$ for $k=i+1,i+2,\ldots,n-1$, there's no energy committed during $[T_{3}+1,T]$ by \socs, and the maximum profit gained by the off-line algorithm is $\sum_{k=b_{i-1}+1}^{b_{n-1}}g(k)$.

Concluding (a) and (b), we complete the proof.

\subsection{proof of Lemma \ref{lem:nes_con}}
Given $l_{1},l_{2},\ldots,l_{n}$, solving \eqref{eq:function_optimization} can be formulated as the following problem:
\begin{equation*}
\label{function_optimizing}
\begin{array} {lll}
\min: & y & \\
\text{subject to:} & y\geq \frac{p_{i}C+\sum\limits_{k=i+1}^{n-1}p_{k}l_{k}}{\sum\limits_{k=i+1}^{n}p_{k}l_{k}}&i=2,3,\ldots,n-1 \\
& p_{\min} \leq p_{i} \leq p_{\max} & i=2,3,\ldots,n-1\\
\mathrm{var}: & y,~p_{i}&i=2,3,\ldots,n-1  \\
\end{array}
\end{equation*}

The hessian matrix of the non-equality constraints is:
\begin{equation*}
\left(\begin{array}{ccccccc}
0 & \cdots & 0 & 0 & \cdots & 0 & 0 \\
\vdots & \vdots & \vdots & \vdots & \vdots & \vdots & \vdots \\
0 & \cdots & 0 & 0 & \cdots & 0 & 0 \\
0 & \cdots & 0 & 0 & \cdots & 0 & -l_{i+1} \\
\vdots & \vdots & \vdots & \vdots & \vdots & \vdots & \vdots \\
0 & \cdots & 0 & 0 & \cdots & 0 & -l_{n-1} \\
0 & \cdots & 0 & -l_{i+1} & \cdots & -l_{n-1} & 0 \\
\end{array}\right)
\end{equation*}

Given $l_{k} > 0$ for $k=1,2,\cdots,n-1$, the hessian matrix is positive semi-definite. Then we can verify that the nonlinear constraints are convex.

By partially dualizing on the first set of non-equality constrains, we can get the following Lagrangian function:
\begin{equation*}
L(y,\bm{\mu})=y+\sum_{i=2}^{n-1}\mu_{i}(p_{i}C+\sum\limits_{k=i+1}^{n-1}p_{k}l_{k}-y\sum\limits_{k=i}^{n}p_{k}l_{k})
\end{equation*}
where $\mu_{i},~i=2,3,\ldots,n-1$ are the dual variables associated with the non-equality constraints.

Then the first-order optimality necessary condition is:
\begin{equation*}
\begin{split}
&1-\sum_{i=2}^{n-1}\mu_{i}\sum\limits_{k=2}^{n}p_{k}l_{k}=0 \\
&\mu_{i+1}C-l_{i}\sum\limits_{k=2}^{i}(y-1)\mu_{k} = 0,~i=2,3,\ldots,n-1 \\
\end{split}
\end{equation*}

Note that $y>1$, $0< l_{i}<C$ and $p_{\text{min}}\leq p_{i}\leq p_{\text{max}}$ for $i=2,3,\ldots,n-1$. Then by above equation, we can derive that $\mu_{i}>0,~i=2,3,\cdot,n-1$. Moreover, according to the complementary slackness condition:
\begin{equation*}
\mu_{i}(p_{i-1}C+\sum\limits_{k=i}^{n-1}p_{k}l_{k}-y\sum\limits_{k=i}^{n}p_{k}l_{k}) =0,~i=2,3,\ldots,n-1
\end{equation*}
we can get that the followings always hold:
\begin{equation*}
p_{i-1}C+\sum\limits_{k=i}^{n-1}p_{k}l_{k}-y\sum\limits_{k=i}^{n}p_{k}l_{k}=0,~i=2,3,\ldots,n-1
\end{equation*}
or
\begin{equation*}
\frac{p_{i-1}C+\sum\limits_{k=i}^{n-1}p_{k}l_{k}}{\sum\limits_{k=i}^{n}p_{k}l_{k}}=y,~i=2,3,\ldots,n-1
\end{equation*}
Then we complete the proof.

\subsection{proof of Lemma \ref{lem:pn}}
\begin{proof}
If $p_{n-1}\nrightarrow p_{n}$ for the optimal $g(z)$, we can assume there's another function $g'(z)$ which contains $n+1$ steps. We let $g'(z)=g(z)$ for $z=1,2,\ldots,b_{n-1},b_{n-1}+1,b_{n-1}+2,\ldots,b_{n}$ and $g'(b_{n-1}+1)=(p_{n-1}+p_{n})/2$. According to Lemma \ref{lem:lcr}, easily we can get that the competitive ratio with $g'(z)$ under the subsets satisfy the followings:
\begin{equation*}
\begin{split}
&\frac{\frac{p_{n-1}+p_{n}}{2}C}{p_1(l_n-1)}<\frac{p_{2}C}{p_1l_n} \\
&\frac{p_{i}C+\sum\nolimits_{k=i+1}^{n-1}p_kl_k+\frac{p_{n-1}+p_{n}}{2}}{\sum\nolimits_{k=i+1}^{n-1}p_kl_k+\frac{p_{n-1}+p_{n}}{2}+p_{1}(l_{n}-1)}<\frac{p_{i}C+\sum\nolimits_{k=i+1}^{n-1}p_kl_k}{\sum\nolimits_{k=i+1}^{n}p_kl_k}
\end{split}
\end{equation*}

That means the competitive ratio with $g'(z)$ is smaller than that with $g(z)$, contradicting the assumption that $g(z)$ is the optimal. This completes the proof.
\end{proof}

\subsection{proof of Theorem \ref{thm:2}}
\begin{proof}
Following similar lines with Lemma \ref{lem:nes_con} and \ref{lem:mp}, we can find that the worst instance for \ocsmb is similar to \socs.
For any worst instance $\omega$, if $z^{\omega}_{\ocsmb}(t)>z^{\omega}_{\socs}(t)(1+1/m)$, it can be verified that $x_{\socs}(t)<x_{\ocsmb}(t),~t\in\mathcal{T}$ must hold according to the algorithm rules.
So the cumulative profit difference between $\ocsmb$ and $\socs$ at time slot $t$ is bounded by $p(t)z^{\omega}_{\socs}(t)/m$ under $\omega$. So we have:
\begin{equation*}
\begin{split}
\cratio(\ocsmb)&\leq \max\limits_{t\in \mathcal{T}}\frac{R_{\ofa}(\omega)+p(t)z^{\omega}_{\socs}(t)/m}{R_{\socs}(\omega)} \\
&\leq \cratio(\socs) + \max\limits_{t\in \mathcal{T}} \frac{p(t)z^{\omega}_{\socs}(t)/m}{R_{\socs}(\omega)}
\end{split}
\end{equation*}

Moreover, for any time slot $t$, the following inequality always hold:
\begin{equation*}
\begin{split}
\frac{p(t)z^{\omega}_{\socs}(t)/m}{R_{\socs}(\omega)}&\leq \frac{g((1-1/m)z^{\omega}_{\socs}(t))z^{\omega}_{\socs}(t)/m}{\int_{z^{\omega}_{\socs}(t)}^{C}g(z)dz} \\
&\leq \frac{g((1-1/m)c^{\mathsf{th}})(C-c^{\mathsf{th}})/m}{\int_{c^{\mathsf{th}}}^{C}g(z)dz} \\
&\leq \frac{\theta \cratio(\socs)}{m^2}
\end{split}
\end{equation*}
The above inequality use the result that $p(t)\leq g((1-1/m)z^{\omega}_{\socs}(t))$ under the worst case $\omega$ and that $g(z)$ is convex.
\end{proof}

\end{document}